\newtheorem{theorem}{Theorem}[section]
\newtheorem{definition}[theorem]{Definition}
\newtheorem{proposition}[theorem]{Proposition}
\newtheorem{lemma}[theorem]{Lemma}
\newtheorem{remark}[theorem]{Remark}
\newcommand{\cL}{\mathcal{L}}
\newcommand{\cR}{\mathcal{R}}
\newcommand{\Z}{\mathbb{Z}}
\newcommand{\R}{\mathbb{R}}
\newcommand{\T}{\mathbb{T}}
\newcommand{\C}{\mathbb{C}}
\newcommand{\G}{\mathbb{G}}
\newcommand{\Prob}{\mathbb{P}}
\newcommand{\bx}{\mathbf{x}}
\newcommand{\by}{\mathbf{y}}
\newcommand{\ba}{\mathbf{a}}
\newcommand{\bc}{\mathbf{c}}
\newcommand{\bk}{\mathbf{k}}
\newcommand{\bn}{\mathbf{n}}
\newcommand{\bm}{\mathbf{m}}
\newcommand{\bp}{\mathbf{p}}
\newcommand{\bq}{\mathbf{q}}
\newcommand{\br}{\mathbf{r}}
\newcommand{\bA}{\mathbf{A}}
\newcommand{\bE}{\mathbf{E}}
\newcommand{\bgamma}{\boldsymbol{\gamma}}
\newcommand{\bzero}{\boldsymbol{0}}
\newcommand{\tGamma}{\vec{\Gamma}}
\newcommand{\tq}{\vec{\bq}}
\newcommand{\tp}{\vec{\bp}}
\newcommand{\starb}{*_{\mathscr B}}
\newcommand{\Trace}[1]{\mathcal{T}_\Prob \left ( #1 \right )}
\newcommand{\be}{\mathbf{e}}
\newcommand{\dps}{\displaystyle }
\newcommand{\class}[2]{\overline{#2}^{\cR_#1}}
\author{Eric Canc\`es \\ CERMICS, Ecole des Ponts and Inria Paris \\ 6 \& 8 avenue
Blaise Pascal\\77455 Marne-la-Vall\'ee Cedex 2, France\\ \texttt{eric.cances@enpc.fr}
\and
Paul Cazeaux \\ School of Mathematics\\
University of Minnesota\\
Minneapolis, Minnesota 55455\\
\texttt{pcazeaux@umn.edu}
\and Mitchell Luskin\\ School of Mathematics\\
University of Minnesota\\
Minneapolis, Minnesota 55455\\
\texttt{luskin@umn.edu}
}
\title{Generalized Kubo Formulas for the Transport Properties of Incommensurate 2D Atomic Heterostructures}
\begin{document}
\maketitle
\begin{abstract}
We give an exact formulation for the transport coefficients of incommensurate two-dimensional atomic multilayer systems in the tight-binding approximation. This formulation is based upon the $C^*$ algebra framework introduced
by Bellissard and collaborators~\cite{bellissard1994noncommutative,bellissard2003coherent} to study aperiodic solids (disordered crystals, quasicrystals, and amorphous materials), notably in the presence of magnetic fields (quantum Hall effect).
We also present numerical approximations and test our methods on a one-dimensional incommensurate bilayer system.
 \end{abstract}

\section{Introduction}
The synthesis and modeling of layered two-dimensional atomic heterostructures is currently being intensely investigated
with the goal of designing materials with desired electronic and optical properties~\cite{Geim2013}.  These multilayer two-dimensional materials
are generally incommensurate, that is, the multilayer system does not have a periodic structure although each individual layer
does have a periodic structure.

Despite the scientific and technological importance of
incommensurate materials, an exact
formulation has not yet been given for important properties such as the electronic density of
states or the electrical conductivity.  The typical approach to modeling and computing the properties of incommensurate structures is to approximate them by commensurate structures or supercells~\cite{Terrones2014}.  Although this approach might provide a good approximation in many cases, the error is generally uncontrolled (see \cite{cazeauxrippling} for an analysis of the supercell approximation of the mechanical relaxation of coupled incommensurate chains). Further, the approximation of the important case of small angle rotated bilayer structures requires supercell sizes that are too large for numerical solution~\cite{2DPerturb15}.  In this paper, we give such an exact
formulation and develop numerical approximations.

Within independent electron or mean field models such as Hartree-Fock or Kohn-Sham, the electronic density of states and transport properties for periodic structures can be rigorously formulated by the use of the Bloch transform to obtain generalized eigenstates of the Hamiltonian as a plane wave with wave vector in the Brillouin zone ($\bk$-space) multiplied by a periodic function.  This approach leads to the classical Kubo formulae for the transport properties of periodic solids which can be formulated as the trace of corresponding operators~\cite{Kaxiras_2003}.

Although incommensurate multilayer two-dimensional materials no longer have a periodic structure and the local environment of each atom is unique, this local environment can be simply characterized by shifts of each layer.  Further, the shifts corresponding to the local environments are uniformly distributed over the periodic unit cell of each layer.
One can then develop generalized Kubo formulae for incommensurate heterostructures by considering the integral over the uniformly distributed shifts of the trace of operators that depend of the corresponding local environment.

To give a precise formulation and enable mathematical and numerical analysis, we apply the $C^*$-algebra approach for aperiodic solids introduced
by Bellissard and collaborators~\cite{bellissard1994noncommutative,bellissard2003coherent} to incommensurate heterostructures. Following these lines, we can characterize the incommensurate structure by its hull, which is a compact description of the local environments. We first present the construction of the hull for perfect multilayers in Sections~\ref{subsec:geometry} and \ref{sec:hullcontinuous}.  The case of disordered heterostructures is dealt with in Section~\ref{subsec:disorder}.

In Section~\ref{sec:CSA}, we present the $C^*$-algebra formalism for tight-binding models.
Rather than working with operators on the infinite dimensional tight-binding state space, the $C^*$-algebra approach allows us to exploit the simplification of working directly with a $C^*$-algebra of functions which represent the operators of interest.
Within the tight-binding model that we consider, we obtain a compact parametrization of the tight-binding Hamiltonian by using environment-dependent site and hopping functions~\cite{Fang2015}.
The $C^*$-algebra approach then allows us to concisely construct transport operators and their trace by utilizing the algebra structure. After introducing the abstract setting in Section~\ref{subsec:abstractsetting}, we describe the special case of perfect incommensurate bilayers in full detail in Section~\ref{sec:bilayers}.

In Section~\ref{sec:numerics}, we present a new, minimalistic one-dimensional toy model to showcase the expected effects of incommensurability in coupled multilayered systems.
We chose to introduce here a simple discretization based on periodic supercells and on the Kernel Polynomial Method~\cite{Weisse2006}. 
In future works, we will develop more sophisticated approaches based on the $C^*$-algebras introduced in this paper. We will present their numerical analysis,  using the $C^*$-algebra formalism, and use the minimalistic model introduced in this paper as a benchmark to analyze and test different implementation strategies, possibly targeting directly incommensurate cases, for computing the density of states, the conductivity or other observables. One such possible strategy is to exploit locality~\cite{MassattDOS16,Carr2016}.

Previous research on the development of numerical methods to approximate transport properties within the $C^*$-algebra formulation has been done by Prodan~\cite{prodan2012quantum} for the effects of disorder and magnetic fields.
Recent work on the analysis and computation of the density of states for incommensurate layers from the operator point of view is given in~\cite{MassattDOS16}. We note that the density of states is defined in~\cite{MassattDOS16} as a thermodynamic limit, while the density of states and transport properties are given explicit expressions in the $C^*$-algebra approach, which directly provide the values of the quantities of interest in the thermodynamic limit.

\section{Geometries of multi-layered systems}\label{sec:hull}

\subsection{Perfect multilayer structure}\label{subsec:geometry}

\begin{figure}[t!]
	\centering
	\includegraphics[width=.8\textwidth]{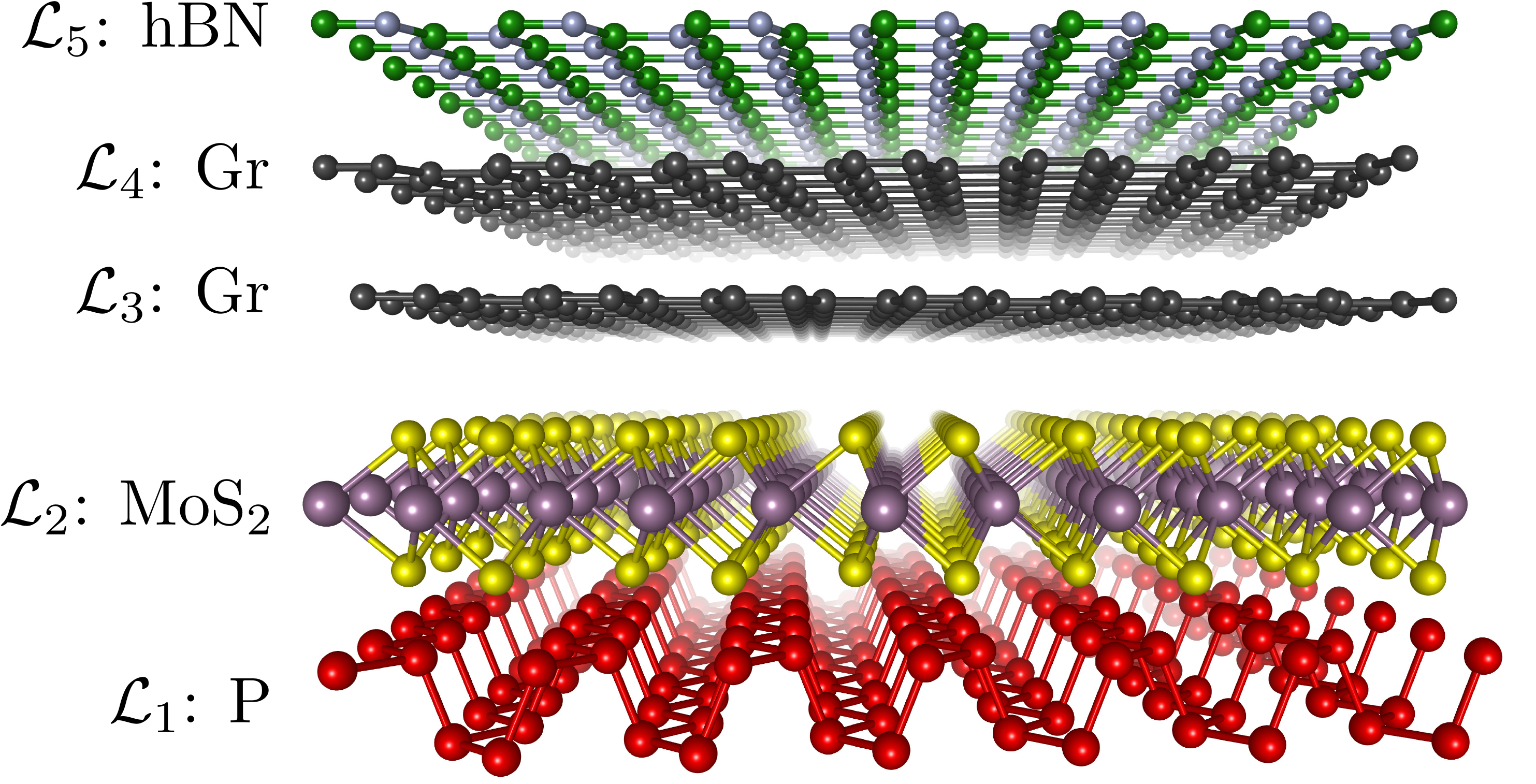}
	\caption{Example of a heterostructure: stack of five monolayers}\label{fig:figure1}
\end{figure}
Heterostructures are vertical stacks of a few two-dimensional crystalline monolayers.
An example is depicted in Figure~\ref{fig:figure1}. Five distinct layers with different atomic components and structures are positioned as a vertical stack.
Due to the weak van der Waals nature of the interactions between these layers, they do not relax into a common periodic structure, but rather each layer essentially keeps the structure it possesses as an isolated monolayer.
The resulting assembly is thus in general not periodic: it is an aperiodic structure with a long-range order.
A systematic model of heterostructures starts naturally by a rigorous depiction of this particular geometry, which can be idealized in the following way.

For the sake of generality, we consider $d$-dimensional systems embedded in $(d+1)$-dimensional space. Note that we choose $d$ as the natural dimension of the structure, since the $(d+1)$\textsuperscript{st} dimension plays a very particular and limited role, especially in tight-binding models. We are in particular interested in $d = 2$ for the layered heterostructures which motivate our study, while the choice $d = 1$ enables us to present simple numerical examples in Section~\ref{sec:numerics}.

Let us consider such a $(d+1)$-dimensional system of $p$ parallel $d$-dimensional periodic atomic layers denoted $\cL_j \subset \R^{d+1}$, $j = 1 \dots p$.
We denote by
\begin{itemize}
\item $(\be_1,\cdots,\be_{d+1})$ an orthonormal basis of the physical space such that each layer is perpendicular to $\be_{d+1}$; from now on we identify the physical space with $\R^{d+1}$ using the cartesian coordinates $\bx=(x_1,\cdots,x_{d+1})^T$ associated with this basis;
\item $h_j$ the $(d+1)$\textsuperscript{st} coordinate of the center of layer $j$. Without loss of generality, we can assume that $0=h_1< h_2 < \cdots < h_p$;
\item $\cR_j$ the $d$-dimensional periodic lattice of layer $j$;
\item $\bE_j$ the matrix in $\R^{d \times d}$ whose columns form a basis generating the lattice $\cR_j$:
\begin{equation}
\cR_j := \bE_j \Z^d \subset \R^d;
\end{equation}
\item $\Gamma_j := \R^d / \cR_j$ the quotient of $\R^d$ by the discrete lattice $\cR_j$, which has the topology of a $d$-dimensional torus and can be canonically identified with the periodic unit cell $\widehat\Gamma_j:=\bE_j  [-1/2,1/2)^d$ of layer $j$;
\item $m_j$ the motif of layer $j$, i.e., the measure on $\R^{d+1}$ supported in $\widehat\Gamma_j \times \R$ and representing the nuclear distribution $\rho^{\rm nuc}_j$ of layer $j$. More precisely, $m_j$ is a finite sum of positively weighted Dirac measures of the form
\[
	m_j = \sum_{k=1}^{M_j} z_k^{(j)} \delta_{\bx_k^{(j)}},
\]
where $M_j$ is the number of nuclei per unit cell in layer $j$, $z_1^{(j)}, \cdots, z_{M_j}^{(j)}$ the atomic charges of these nuclei, and $\bx_k^{(j)} \in \widehat\Gamma_j\times\R$ their positions in a reference configuration in which the center of layer $j$ belongs to the plane $x_{d+1}=0$. The nuclear distribution $\rho^{\rm nuc}_j$ of layer $j$ is then given by
\[
\rho^{\rm nuc}_j= \sum_{\bn_j \in \cR_j} m_j \left (\cdot - (\bgamma_j+\bn_j + h_j \be_{d+1})\right )= \sum_{\bn_j \in \cR_j} \sum_{k=1}^{M_j} z_k^{(j)} \delta_{\bgamma_j+\bn_j+\bx_k^{(j)}+h_j\be_{d+1}},
\]
where $\bgamma_j \in \widehat\Gamma_j$ depends on the horizontal position of the lattice sites of layer $j$ relatively to the origin of the coordinates. Here and in the sequel, we use the same notation to denote a vector $\by$ of $\R^d$ and its canonical embedding $(\by^T,0)^T$ in $\R^{d+1}$.
\end{itemize}
\begin{remark}
	Note that, by virtue of the identification $\Gamma_j \equiv \widehat\Gamma_j$, an equivalence class $\bgamma_j \in \Gamma_j$ can be seen either as a discrete set of points in $\R^d$, or as one point of the periodic unit cell $\widehat\Gamma_j$. Since both viewpoints are employed here, we denote these two usages differently to avoid confusion:
	\begin{itemize}
		\item $\bgamma_j$ will be used to denote a single point of the periodic unit cell $\widehat\Gamma_j$,
		\item $\bgamma_j + \cR_j$ will be used to denote the corresponding set of points in $\R^d$. In particular, the sum of the values of a function $f$ over the lattice sites of layer $j$ will be denoted as $\sum_{\bp_j \in \bgamma_j + \cR_j} f(\bp_j)$.
	\end{itemize}
\end{remark}

\subsection{Translation group and the hull}\label{sec:hullcontinuous}

To understand the geometry of our heterostructures, it is a useful exercise to picture the origin of coordinates as our viewpoint (in the sense of ''position of observation"). Neighboring atoms then constitute a local environment, e.g., A-A stacking (aligned layers) vs. A-B stacking (staggered layers) in a graphene bilayer~\cite{Fang2016}. This environment, i.e., the positions of all atoms relative to the origin, constitutes a choice of {\it configuration} for the structure.

In the case of a periodic material (a perfect crystal), the set of such possible configurations, or the {\it hull}~\cite{bellissard2003coherent}, is simply the periodic unit cell. Indeed, choosing the origin at points which differ only by a lattice vector results in identical configurations.
This invariance by lattice translations is what enables the classical use of the Bloch theorem to reduce the Hamiltonian operator set on the whole space to a family of easily analyzed operators on the periodic unit cell, indexed by the quasimomenta $\bk$ belonging to the Brillouin zone of the crystal.

This is not the case for the generically incommensurate layered structures presented in Section~\ref{subsec:geometry}. The change of coordinate origin (our viewpoint) is naturally associated with the action of the group $\R^d$ on $\R^{d+1}$ by translations $\mathtt{T}$ that are parallel to the layers:
\begin{equation}\label{def:Rdp1translations}
	\text{For } \mathbf{a} \in \R^d, \qquad
		{\mathtt T}_\mathbf{a}:  \left \{\begin{aligned}
			\R^{d+1} &\to \R^{d+1}, \\
			\mathbf{x} &\mapsto \left(x_1 + a_1, \dots, x_d + a_d, x_{d+1} \right).
		\end{aligned}\right.
\end{equation}
\begin{remark}\label{rem:continuousvsdiscrete}
	Note that there is no translational symmetry in the perpendicular $(d+1)$\textsuperscript{st}-dimension, and therefore we need not include this direction in the continuous translation group. The situation is different in discrete (tight-binding) models, where we model hopping from layer to layer as will be seen later on.
\end{remark}

Now we proceed with the formal definition of the hull. The positions of all the atoms are encoded in the nuclear charge distribution, a Radon measure in $\mathfrak{M}(\R^{d+1})$~\cite{bellissard2003coherent}, on which the translation group $\R^d$ acts naturally\footnote{Given $\mathbf{a} \in \R^d$, the translation $\mathtt{T}_\mathbf{a}$ acts on the space of continuous functions with compact support $\mathcal{C}_c(\R^{d+1})$ through $\mathtt{T}_\mathbf{a} f(\mathbf{x}) = f(\mathtt{T}_{-\mathbf{a}} \mathbf{x})$. Therefore it acts on the space $\mathfrak{M}(\R^{d+1})$ of Radon measures through $\mathtt{T}_\mathbf{a} \mu(f) = \mu \left (\mathtt{T}_{-\mathbf{a}} f \right )$ whenever $f \in \mathcal{C}_c(\R^{d+1})$ and $\mu \in \mathfrak{M}(\R^{d+1})$.}. Then the hull is the dynamical system $(\Omega, \R^d, \mathtt{T})$, where $\Omega$ is the closure of the orbit of the nuclear charge distribution measure on $\R^{d+1}$ generated by the atoms of all $p$ layers under the action of $\R^d$ through $\mathtt{T}$.
\begin{figure}[t]
\centering
	\begin{subfigure}[b]{.49\linewidth}
		\centering
		\includegraphics[width=.85\textwidth]{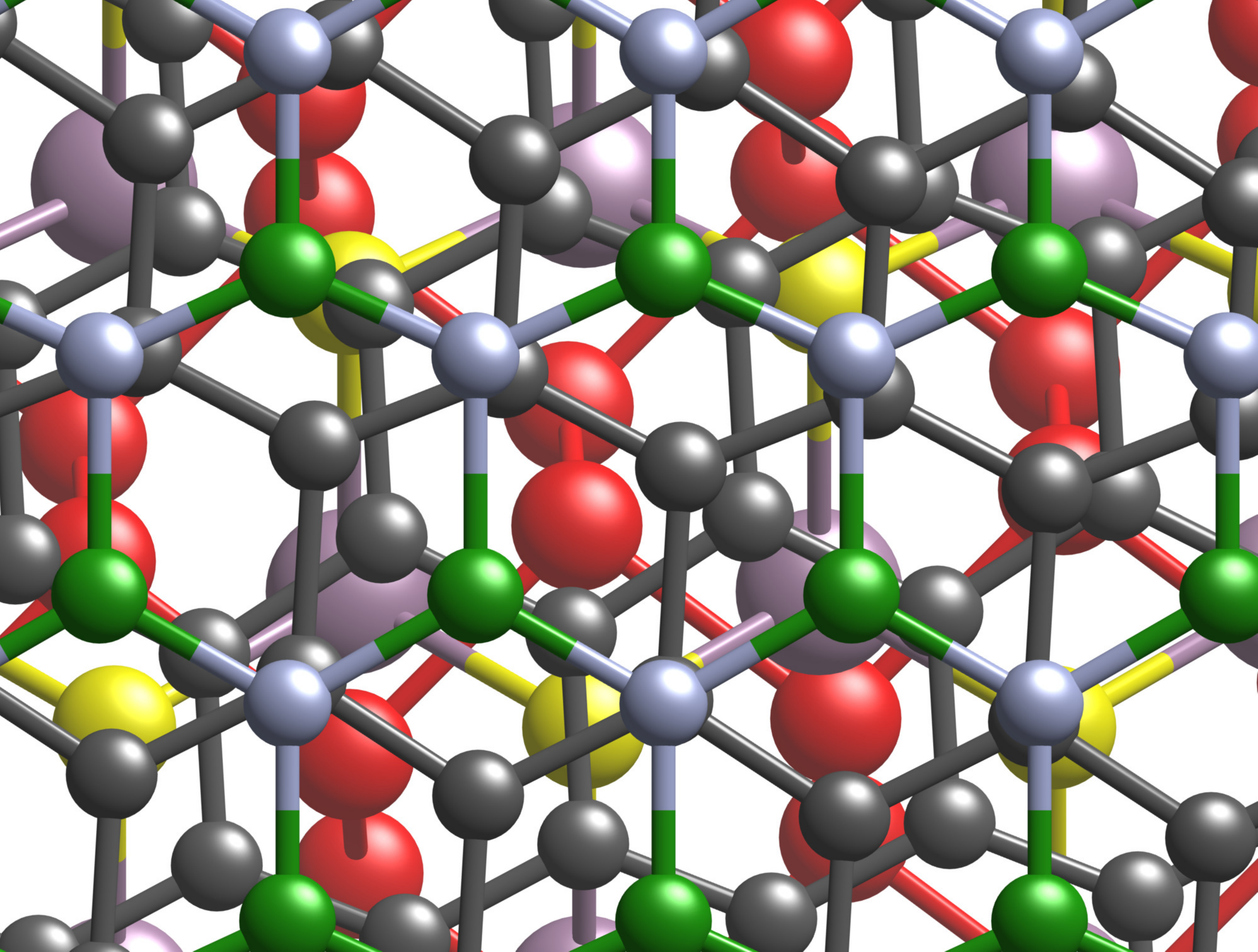}\\
		\vspace{.5cm}
		\caption{Top view of the crystalline structure}\label{fig:figure2a}
	\end{subfigure}
	\begin{subfigure}[b]{.49\linewidth}
		\centering
		\includegraphics[width=.8\textwidth]{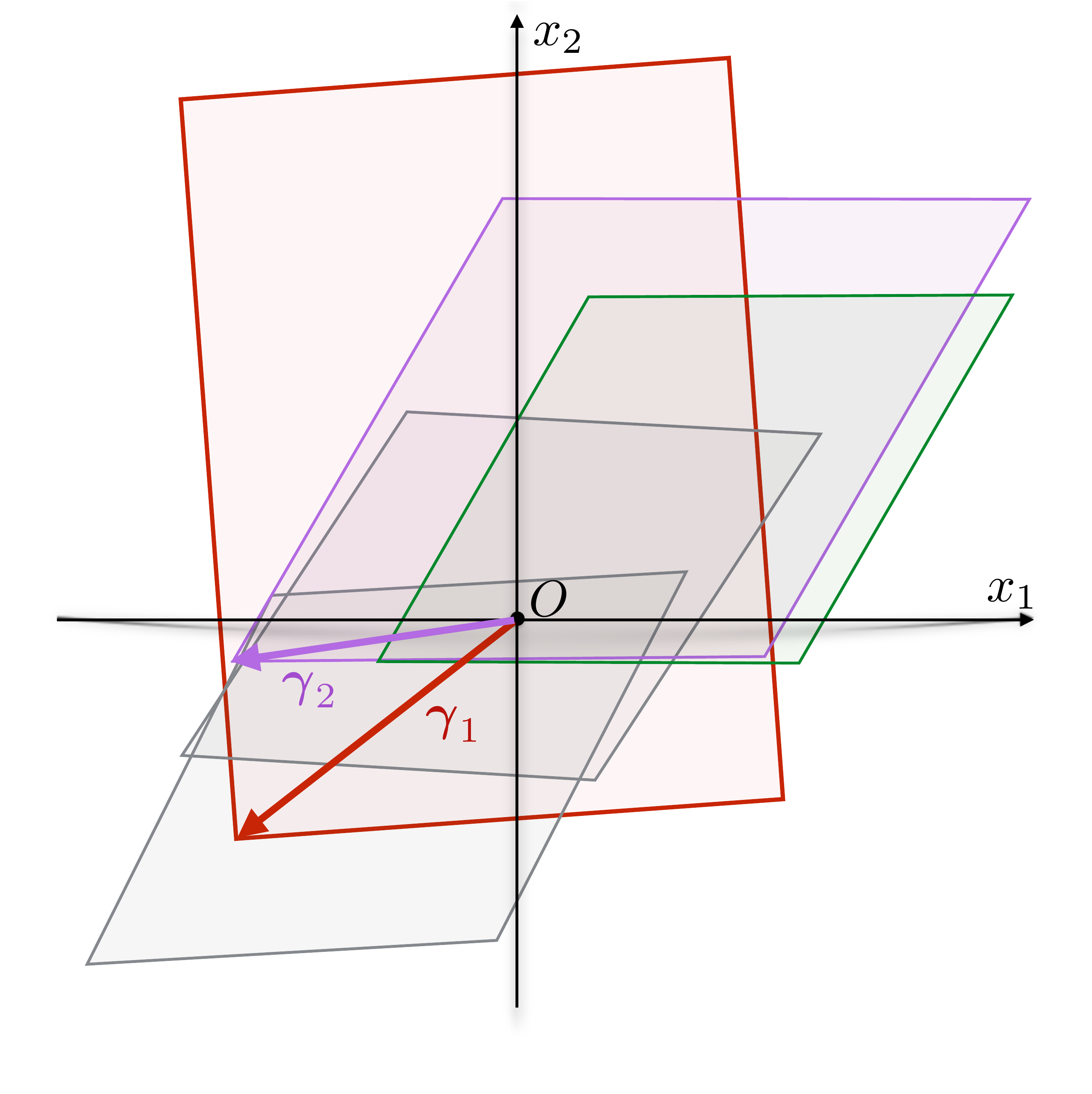}
		\caption{Projection of the origin into the unit cells}\label{fig:figure2b}
	\end{subfigure}
	\caption{Illustration of the process for labeling the local configurations of heterostructures. The vertices of the parallelograms are sites of the shifted lattices $\gamma_j + \cR_j$.}\label{fig:figure2}
\end{figure}

Parameterizing this orbit is similar to describing the position of all atoms relative to the origin, given an arbitrary translation of the system. An example of this process is presented by Figure~\ref{fig:figure2} (see also Figure~\ref{fig:figure3} for a simple one-dimensional picture). While the initial view of all atomic positions in Figure~\ref{fig:figure2a} might appear quite chaotic, it can actually be efficiently encoded. Since each individual layer $\mathcal{L}_j$, $j = 1, \dots, p$ is periodic, the set of all possible configurations for layer $j$ is in one-to-one correspondence with $\Gamma_j$. Indeed, an element $\bgamma_j \in \Gamma_j$ is the projection of the $j$-th layer periodic lattice on the horizontal plane, i.e., an equivalence class modulo $\cR_j$, as seen in Figure~\ref{fig:figure2b}. The overall configuration can thus be parameterized as an element of
\begin{equation}
	\Omega = \Gamma_1 \times \cdots \times \Gamma_p.
\end{equation}

\begin{figure}[t]
\centering
\includegraphics[width=.7\textwidth]{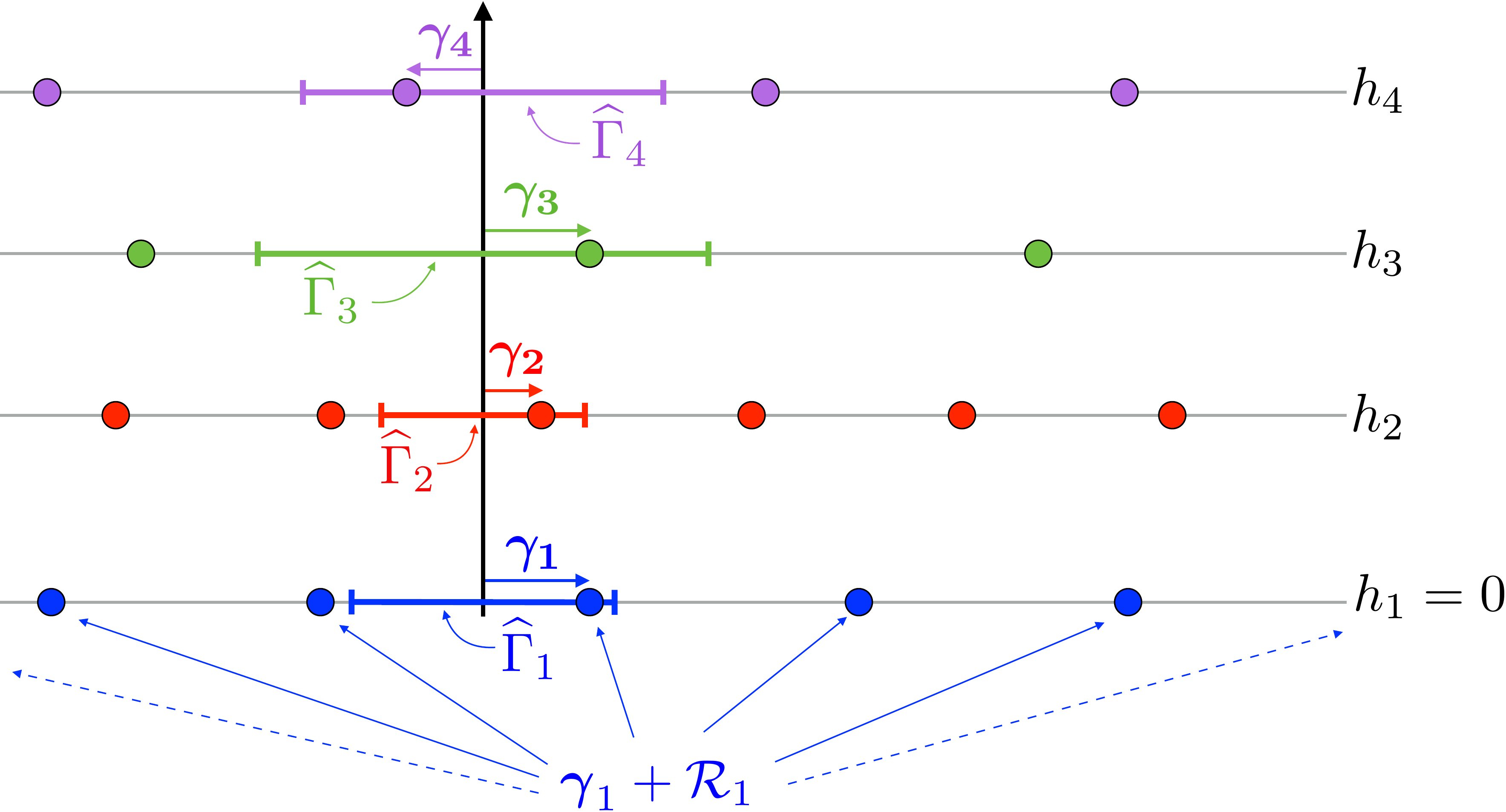}
\caption{Schematic representation of a configuration $\omega=(\bgamma_1,\bgamma_2,\bgamma_3,\bgamma_4) \in \Omega$ for $d=1$ and $p=4$. The vector $\bgamma_j$ on the figure corresponds to the unique representative of the equivalence class $\bgamma_j$ in the periodic unit cell $\widehat\Gamma_j$ and $\bgamma_1+\cR_1$ is the set of the lattice points of the first layer in the configuration $\omega$.}
\label{fig:figure3}
\end{figure}
The hull $\Omega$ therefore has the topology of a $(d p)$-dimensional torus.
For any given configuration $\omega = (\bgamma_1, \dots, \bgamma_p)$, the nuclear charge distribution is
\[
	\rho^{\rm nuc}_\omega = \sum_{j=1}^p \left ( \sum_{\bp_j \in \bgamma_j + \cR_j} \mathtt{T}_{\bp_j} m_j( \cdot - h_j \be_{d+1}) \right ).
\]
The support of the measure $\rho^{\rm nuc}_\omega$ is the discrete set:
\begin{equation}\label{def:parameterizationatomicset}
	\left \{ \begin{aligned}
	&\cL^{\omega} := \bigcup_{j = 1}^p \cL_j^{\omega} \subset \R^{d+1} &\text{ (set of all atomic positions)},\\
	&\text{ with }\cL_j^{\omega} := \textrm{Supp}(m_j) + \bgamma_j + \cR_j +  h_j \be_{d+1}  &\text{ (atomic positions in layer $j$)}.
	\end{aligned} \right.
\end{equation}
There exists a natural action of the additive $\R^d$ group on $\Omega$, corresponding to an horizontal translation of the whole system in the previous parameterization, defined by
\begin{equation}\label{def:omegatranslations}
	\mathtt{T}_\ba (\bgamma_1, \dots, \bgamma_p) = \left( \bgamma_1 + \ba , \dots, \bgamma_p + \ba \right ).
\end{equation}
For convenience, we use the same notation ${\mathtt T}$ to denote the action of the translation group $\R^d$ on $\R^{d+1}$, formula \eqref{def:Rdp1translations}, and on $\Omega$, formula~\eqref{def:omegatranslations}.

Finally, the dynamical system $(\Omega, \R^d, \mathtt{T})$ is equipped with a probability measure $\Prob$ which encodes the relative occurrences of the various configurations. This measure should be invariant by the translation action to reflect the spatial homogeneity of the system. There is here a unique such measure under the incommensurability condition given in the following definition. Let us denote the dual (or reciprocal) lattice of any cocompact\footnote{Recall that a cocompact lattice $\cR$ of $\R^d$ is a discrete periodic lattice of $\R^d$ with $d$ free vectors (in such a way that $\R^d/\cR$ is compact), and that the dual lattice of a cocompact lattice $\cR$ is defined by
\[
	\cR^* = \left \{ \bk \in \R^d \ \vert \ \bk \cdot \bn \in \Z, \quad \forall \bn \in \cR \right \}.
\]
When $\cR$ is generated by the columns of a matrix $\mathbf{E}$, its dual $\cR^*$ is generated by the columns of $(\mathbf{E}^T)^{-1}$, the inverse transpose of $\mathbf{E}$.} 
lattice $\cR$ by $\cR^*$.
\begin{definition}\label{definition:incommensurability}
	The collection of cocompact lattices $\cR_1, \cdots , \cR_p$ of $\R^d$ is called incommensurate if we have for any $p$-tuple $(\bk_1, \dots, \bk_p) \in \cR_1^* \times \cdots \times \cR_p^*$,
	\begin{equation}\label{def:incommensurability}
		\sum_{j = 1}^p \bk_j = \mathbf{0} \quad \Leftrightarrow \quad \bk_j = \mathbf{0} \ \ \forall  j = 1, \dots, p.
	\end{equation}
\end{definition}
This definition, surprising at first, defines incommensurability as the absence of constructive interferences, or Bragg reflections, between the lattices.
The following result proves that it is also the right condition for the layered system to have a {\em homogeneous} character, i.e., all possible configurations will be visited uniformly as we translate our viewpoint along the horizontal place.
\begin{proposition} \label{prop:continuousHullErgodicity}
Let $\cR_1, \cdots , \cR_p$ of $\R^d$ be cocompact lattices of $\R^d$, $\Gamma_j = \R^d/\cR_j$, and $\Omega=\Gamma_1 \times \cdots \times \Gamma_p$, endowed with the uniform probability measure $\Prob$. Then,
	\begin{enumerate}
	\item $\Prob$ is invariant by the translation group $\R^d$;
	\item the dynamical system $(\Omega,\R^d,\mathtt{T},\Prob)$ is uniquely ergodic if and only if the lattices $\cR_1, \cdots , \cR_p$ are incommensurate. In this case, we have the Birkhoff property: for any $f \in C(\Omega)$ and $\omega \in \Omega$,
	\begin{equation}\label{def:continuousBirkhoff}
		\lim_{r \to \infty} \frac{1}{\vert B_r \vert } \int_{B_r} f(\mathtt{T}_{-\ba} \omega) \mathrm{d}\ba = \int_\Omega f \mathrm{d}\Prob,
	\end{equation}
	where $B_r$ is the ball of radius $r$ centered at zero.
\end{enumerate}
\end{proposition}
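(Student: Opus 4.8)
The plan is to reduce both assertions to Fourier analysis on the torus $\Omega$. The first is immediate: the uniform probability measure on each $\Gamma_j = \R^d/\cR_j$ is the normalized Haar measure, which is invariant under every translation $\bgamma_j \mapsto \bgamma_j + \ba$; hence the product measure $\Prob$ is invariant under the diagonal action $\mathtt{T}_\ba$ of~\eqref{def:omegatranslations}. For the second assertion I would work with the characters of $\Omega$: for each $(\bk_1, \dots, \bk_p) \in \cR_1^* \times \cdots \times \cR_p^*$ set
\[
	\chi_{(\bk_1, \dots, \bk_p)}(\bgamma_1, \dots, \bgamma_p) = \exp\Bigl( 2\pi i \sum_{j=1}^p \bk_j \cdot \bgamma_j \Bigr),
\]
which is well defined on $\Omega$ because each $\bk_j \cdot \bgamma_j$ is determined modulo $\Z$. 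These exhaust the Pontryagin dual of $\Omega$; their finite linear combinations (trigonometric polynomials) are dense in $C(\Omega)$ by Stone--Weierstrass; one has $\int_\Omega \chi_{(\bk_1, \dots, \bk_p)} \, \mathrm d\Prob = 1$ if all $\bk_j = \bzero$ and $0$ otherwise; and under the flow $\chi_{(\bk_1, \dots, \bk_p)}(\mathtt{T}_{-\ba}\omega) = \exp\bigl( -2\pi i (\sum_j \bk_j) \cdot \ba \bigr)\, \chi_{(\bk_1, \dots, \bk_p)}(\omega)$, so each character is a flow eigenfunction of ``frequency'' $\sum_j \bk_j \in \R^d$.

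Next I would evaluate the ball averages on a character. A change of variables identifies $\vert B_r\vert^{-1} \int_{B_r} e^{2\pi i\, \mathbf{v} \cdot \ba}\, \mathrm d\ba$ with $\vert B_1 \vert^{-1}$ times the Fourier transform of the indicator of the unit ball evaluated at $r \mathbf{v}$; by the Riemann--Lebesgue lemma it tends to $0$ as $r \to \infty$ for every $\mathbf{v} \neq \bzero$, while it equals $1$ for $\mathbf{v} = \bzero$. Assume now that the lattices are incommensurate. Then Definition~\ref{definition:incommensurability} forces $\sum_j \bk_j = \bzero$ to imply $\bk_j = \bzero$ for all $j$, so for each character the average of $\omega \mapsto \chi_{(\bk_1, \dots, \bk_p)}(\mathtt{T}_{-\ba}\omega)$ over $B_r$ converges, uniformly in $\omega$ since the character has modulus $1$, to $\int_\Omega \chi_{(\bk_1, \dots, \bk_p)} \, \mathrm d\Prob$. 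Since $\sup_\omega \bigl| \vert B_r \vert^{-1} \int_{B_r} f(\mathtt{T}_{-\ba}\omega)\, \mathrm d\ba \bigr| \le \sup_\Omega \vert f \vert$, a standard density argument propagates this from trigonometric polynomials to all $f \in C(\Omega)$ --- which is precisely the Birkhoff property~\eqref{def:continuousBirkhoff}, with convergence uniform in $\omega$. Integrating these averages against an arbitrary $\mathtt{T}$-invariant probability measure $\mu$ and using its invariance then gives $\int_\Omega f \, \mathrm d\mu = \int_\Omega f \, \mathrm d\Prob$ for all $f \in C(\Omega)$, so $\mu = \Prob$ and the system is uniquely ergodic.

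For the converse I argue by contraposition. If the lattices are not incommensurate, choose $(\bk_1, \dots, \bk_p)$, not all zero, in $\cR_1^* \times \cdots \times \cR_p^*$ with $\sum_j \bk_j = \bzero$, and put $\chi := \chi_{(\bk_1, \dots, \bk_p)}$. Then $\chi$ is a nonconstant continuous function with $\chi \circ \mathtt{T}_\ba = \chi$ for every $\ba$, so its kernel $\Omega_0 := \chi^{-1}(\{1\})$ is a proper closed subgroup of $\Omega$ containing every diagonal element $(\ba, \dots, \ba)$. Its normalized Haar measure is therefore a $\mathtt{T}$-invariant probability measure supported on a proper closed subset of $\Omega$, hence different from $\Prob$, and unique ergodicity fails. (Equivalently, the ball averages of the $\mathtt{T}$-invariant function $\chi$ are identically $\chi$, which is not constant, contradicting the uniform convergence to a constant that unique ergodicity would force.)

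I expect the mathematical content to be light, with the care concentrated in two routine places: making the oscillatory-integral estimate precise --- the Riemann--Lebesgue step, or an explicit decay bound for the Fourier transform of a ball indicator --- and organizing the finitely many frequencies of each trigonometric polynomial in the density argument; and, in the converse, checking that $\Omega_0$ is genuinely a proper closed subgroup whose Haar measure differs from $\Prob$. It is also worth recording that the balls $(B_r)_{r > 0}$ form a legitimate averaging (F{\o}lner) sequence for the $\R^d$-action, which justifies the use of ball averages throughout.
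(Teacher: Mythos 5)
Your proof is correct, and for the key decay estimate it takes a genuinely different route from the paper. The paper approximates $f$ by trigonometric polynomials just as you do, but it then handles a nonzero Fourier factor $G_{\bk_1,\dots,\bk_p}$ by approximating the ball $B_r$ with a union of translated unit cells of one lattice, reducing the continuous average to a discrete sum $\frac{1}{\#(\cR_1\cap B_r)}\sum_{\bn_1\in\cR_1\cap B_r}e^{-2i\pi(\sum_{j\ge 2}\bk_j)\cdot\bn_1}$, and invoking a separately proved lattice-sum lemma (an explicit geometric-series bound showing this vanishes iff $\sum_{j\ge2}\bk_j\notin\cR_1^*$), together with a symmetric-difference estimate $\vert B_r\Delta B_r^1\vert\lesssim r^{d-1}$. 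You instead observe that each character is a flow eigenfunction, $\chi(\mathtt{T}_{-\ba}\omega)=e^{-2\pi i(\sum_j\bk_j)\cdot\ba}\chi(\omega)$, so that incommensurability reduces everything to the decay of $\vert B_r\vert^{-1}\int_{B_r}e^{-2\pi i\mathbf{v}\cdot\ba}\,\mathrm d\ba$ for $\mathbf{v}=\sum_j\bk_j\neq\bzero$, which follows from Riemann--Lebesgue after rescaling; this is shorter and avoids both the cell-approximation of the ball and the discrete lemma. The trade-off is that the paper's discrete lemma is reused essentially verbatim for the groupoid analogue (Proposition~\ref{prop:discreteHullErgodicity}), where continuous averages are unavailable and your Riemann--Lebesgue shortcut does not apply. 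Two further points in your favor: your derivation of unique ergodicity (integrating the uniformly convergent ball averages against an arbitrary invariant measure $\mu$ to get $\mu=\Prob$) is more direct than the paper's, which proves ergodicity of $\Prob$ via an $L^1$ approximation argument and leaves the passage from uniform Birkhoff convergence to uniqueness implicit; and in the converse direction your explicit construction of a second invariant measure (Haar on the proper closed subgroup $\chi^{-1}(\{1\})$, which has $\Prob$-measure zero since $\Omega$ is connected) genuinely refutes unique ergodicity, whereas the paper only exhibits a nonconstant invariant continuous function and concludes non-ergodicity.
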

\begin{proof}
	The first point follows from the definition~\eqref{def:omegatranslations} and the translation invariance of the Lebesgue measure. To show that incommensurability implies ergodicity, we will make use of the following lemma:
	\begin{lemma}\label{lem:ergodicity1}
		Let $\cR$ be a lattice in $\R^d$ and $\mathbf{k} \in \R^d$. Then:
		\begin{equation}\label{eq:ergodicitylemma}
			 \lim_{r \to \infty} \frac{1}{\# \left ( \cR \cap B_r \right )} \sum_{\bn \in \cR \cap B_r} e^{2i\pi \bk \cdot \bn} = 0 \quad \Leftrightarrow \quad  \bk \notin \cR^*.
		\end{equation}
	\end{lemma}
	It is clear that if $ \bk \in \cR^*$, then the weighted sum in the left-hand side in~\eqref{eq:ergodicitylemma} is always equal to $1$ for all $r>0$, and thus the limit is not zero.
	Let $(\bc_1, \dots, \bc_d)$ be a basis of $\cR$ and $(\bc^*_1, \dots, \bc^*_d)$ the associated dual basis of $\cR^*$ ($\bc_j \cdot \bc_k^* = \delta_{jk}$).
	We assume now that $\bk \notin \cR^*$, and expanding $\bk = k_1 \bc_1^* + \dots + k_d \bc_d^*$, we suppose that $k_1 \notin \Z$ without loss of generality.

	Let $\mathcal{E}_r^\perp$ be the projection of $\cR \cap B_r$ onto the last $(d-1)$ lattice coordinates:
	\[
		\mathcal{E}_r^\perp = \left \{(n_2, \dots, n_d) \in \Z^{d-1} \quad \biggr\vert \quad \exists n_1 \in \Z,\ \sum_{j=1}^d n_j \bc_j \in \cR \cap B_r \right \}.
	\]
	There exists for any $(d-1)$-tuple $\widetilde{\bn} = (n_2, \dots, n_d)$ in $\mathcal{E}_r^\perp$ two integers $N_r^\pm(\widetilde{\bn})$ such that for any $n_1 \in \Z$, $\sum_{j = 1}^d n_j \bc_j$ belongs to $\cR \cap B_r$ if and only if $N_r^-(\widetilde{\bn}) \leq n_1 < N_r^+(\widetilde{\bn})$.
	We decompose accordingly the sum in~\eqref{eq:ergodicitylemma}:
	\begin{align*}
		\sum_{\bn \in \cR \cap B_r} e^{2i\pi \bk \cdot \bn} &= \sum_{\widetilde{\bn} \in \mathcal{E}_r^\perp} \sum_{n_1 = N_r^-(\widetilde{\bn})}^{N_r^+(\widetilde{\bn})-1}  e^{2 i \pi \left (k_1 n_1  + \widetilde{\bk} \cdot \widetilde{\bn} \right )}\\
		&= \sum_{\widetilde{\bn} \in \mathcal{E}_r^\perp} e^{2 i \pi \widetilde{\bk} \cdot \widetilde{\bn} } \left ( \frac{e^{2i\pi k_1 N_r^-(\widetilde{\bn})} - e^{2i\pi k_1 N_r^+(\widetilde{\bn})}}{1 - e^{2i\pi k_1}} \right ).
	\end{align*}
	We can bound the number of elements in $\left ( \cR \cap B_r \right )$ from below by $c_1 r^d$ and of $\mathcal{E}_r^\perp$ by $c_2 r^{d-1}$ where $c_1, c_2 > 0$ are two geometrical constants depending only on $\cR$. Thus
	\[
		\left \vert \frac{1}{\# \left ( \cR \cap B_r \right )} \sum_{\bn \in \cR \cap B_r} e^{2i\pi \bk \cdot \bn} \right \vert \leq \left ( \frac{2}{\vert 1 - e^{2i\pi k_1}\vert}\right ) \frac{c_2}{c_1 r}.
	\]
	Since $k_1 \notin \Z$, Lemma~\ref{lem:ergodicity1} is proved.

	{\bf Birkhoff property.} Let us now suppose that the collection of lattices $\cR_1, \cdots , \cR_p$ is incommensurate. Let $f \in C(\Omega)$ and $\omega = (\bgamma_1, \dots, \bgamma_p) \in \Omega$.  Let $\varepsilon > 0$ and $T_\varepsilon$ be a trigonometric polynomial such that $\Vert f - T_\varepsilon \Vert_\infty \leq \varepsilon$. $T_\varepsilon$ is a finite linear combination of Fourier factors of the form
	\[
		G_{\bk_1, \dots, \bk_p}: (\bgamma_1, \dots, \bgamma_p) \mapsto e^{2i\pi (\bk_1 \cdot \bgamma_1 + \cdots + \bk_p \cdot \bgamma_p)} \quad \text{ with } \quad (\bk_1, \dots, \bk_p) \in \cR_1^* \times \cdots \times \cR_p^*.
	\]
	If $\bk_j = \bzero$ for all $j = 1, \dots, p$, then clearly
	\[
		\lim_{r \to \infty} \frac{1}{\vert B_r \vert } \int_{B_r} G_{\bzero, \dots, \bzero}(\mathtt{T}_{-\ba} \omega) \mathrm{d}\ba = 1 = \int_\Omega G_{\bzero, \dots, \bzero} \mathrm{d}\Prob.
	\]
	Otherwise, we assume without loss of generality that $\bk_2 \neq \bzero$, and we approximate the ball $B_r$ for $r>0$ by a union of translated unit cells,
	\[
		B^1_r = \bigcup_{\bn_1 \in \cR_1 \cap B_r} \left ( \widehat{\Gamma}_1  + \bn_1 \right ) \qquad \text{where }  \widehat{\Gamma}_1 = \mathbf{E}_1[-1/2, 1/2)^d.
	\]
	Averaging over this approximate ball, we find:
	\begin{align*}
		\frac{1}{ \vert B_r^1 \vert } & \int_{B^1_r} G_{\bk_1, \dots, \bk_p}(\mathtt{T}_{-\ba} \omega) \mathrm{d}\ba = \frac{\vert \widehat{\Gamma}_1 \vert^{-1}}{\# \left ( \cR_1 \cap B_r \right )}\sum_{\bn_1 \in \cR_1 \cap B_r} \int_{\widehat{\Gamma}_1} e^{2 i \pi \sum_{j = 1}^p \bk_j \cdot (\bgamma_j - \ba - \bn_1)}\mathrm{d}\ba\\
		&= \left ( \vert \widehat{\Gamma}_1 \vert^{-1} \int_{\widehat{\Gamma}_1} e^{2 i \pi \sum_{j = 1}^p \bk_j \cdot (\bgamma_j - \ba)}\mathrm{d}\ba \right ) \frac{1}{\# \left ( \cR_1 \cap B_r \right )} \sum_{\bn_1 \in \cR_1 \cap B_r} e^{-2 i \pi \sum_{j = 2}^p \bk_j \cdot \bn_1}.
	\end{align*}
	We assume that the lattices $\cR_1, \cdots , \cR_p$ are incommensurate, and thus $\sum_{j = 2}^p \bk_j \in \cR_1^*$ if and only if $\bk_j = \bzero$ for all $j = 2, \dots, p$.
	This is not possible since $\bk_2 \neq \bzero$, and we therefore deduce from Lemma~\ref{lem:ergodicity1} that, uniformly in $\omega$,
	\[
		\lim_{r \to \infty} \frac{1}{ \vert B_r^1 \vert } \int_{B^1_r} G_{\bk_1, \dots, \bk_p}(\mathtt{T}_{-\ba} \omega) \mathrm{d}\ba = 0.
	\]
	Note that for some $C>0$ independent of $r$, we have $\vert B_r \Delta B^1_r \vert \leq C r^{d-1}$ (where $A \Delta B$ denotes the symmetric difference of the sets $A$ and $B$). Thus we conclude
	\[
		\forall (\bk_1, \dots, \bk_p) \in (\cR_1^* \times \cdots \times \cR_p^*) \setminus \left\{{\mathbf 0}\right\}, \quad \lim_{r \to \infty} \frac{1}{\vert B_r \vert } \int_{B_r} G_{\bk_1, \dots, \bk_p}(\mathtt{T}_{-\ba} \omega) \mathrm{d}\ba = 0, \text{ uniformly in } \omega.
	\]
	As a result, the trigonometric polynomial $T_\varepsilon$ satisfies for $r>0$ large enough, uniformly in $\omega$,
	\[
		\left \vert \frac{1}{\vert B_r \vert } \int_{B_r} T_\varepsilon(\mathtt{T}_{-\ba} \omega) \mathrm{d}\ba - \int_\Omega T_\varepsilon \mathrm{d} \Prob\right \vert \leq \varepsilon.
	\]
	Finally, since $\Vert f - T_\varepsilon \Vert_\infty \leq \varepsilon$ we obtain the Birkhoff ergodic formula: for $r$ large enough, uniformly in $\omega$,
	\begin{equation}\label{eq:proofBirkhoffcontinuous}
		\left \vert \frac{1}{\vert B_r \vert } \int_{B_r} f(\mathtt{T}_{-\ba} \omega) \mathrm{d}\ba - \int_\Omega f \mathrm{d} \Prob\right \vert \leq 3 \varepsilon.
	\end{equation}

	\noindent {\bf Ergodicity.}
	Let $B$ be a measurable subset of $\Omega$ invariant under the action of $\R^d$. Let $\varepsilon > 0$ and $f_\varepsilon$ be a continuous function such that $\Vert \chi_B - f_\varepsilon\Vert_{L^1(\Omega)} \leq \varepsilon$. By invariance of $B$ under translations, we have for all $\ba \in \R^d$,
	\[
		\Vert f_\varepsilon \circ \mathtt{T}_\ba - f_\varepsilon \Vert_{L^1(\Omega)} \leq 2 \varepsilon.
	\]
	We can bound the difference between $f_\varepsilon$ and its Birkhoff means:
	\begin{align*}
		\left \Vert \frac{1}{\vert B_r \vert } \int_{B_r} f_\varepsilon \circ \mathtt{T}_{-\ba}  \mathrm{d}\ba - f_\varepsilon\right \Vert_{L^1(\Omega)} &= \int_\Omega \left \vert \frac{1}{\vert B_r \vert }  \int_{B_r} f_\varepsilon( \mathtt{T}_{-\ba} \omega ) \mathrm{d}\ba - f_\varepsilon(\omega) \right \vert \mathrm{d}\Prob(\omega)\\
		& \leq \frac{1}{\vert B_r \vert }  \int_{B_r} \int_\Omega \left \vert f_\varepsilon ( \mathtt{T}_{-\ba} \omega ) - f_\varepsilon(\omega) \right  \vert \mathrm{d}\Prob(\omega)  \mathrm{d}\ba \leq 2 \varepsilon,
	\end{align*}
	where we have used the triangle inequality for integrals and Fubini's theorem for non-negative functions. Since the Birkhoff means of $f_\varepsilon$ converge uniformly in $\omega$ for $r \to \infty$ to $\int_\Omega f_\varepsilon \mathrm{d} \Prob$, we deduce that
	\[
		\left \Vert \int_\Omega f_\varepsilon \mathrm{d} \Prob - f_\varepsilon\right \Vert_{L^1(\Omega)} \leq 2 \varepsilon.
	\]
	As a consequence, we have by the triangle inequality,
	\begin{align*}
		\Vert \chi_B - \Prob(B) \Vert_{L^1(\Omega)} \leq \Vert \chi_B - f_\varepsilon \Vert_{L^1(\Omega)} + \left \Vert f_\varepsilon -  \int_\Omega f_\varepsilon \mathrm{d} \Prob  \right \Vert_{L^1(\Omega)} + \left \Vert \int_\Omega f_\varepsilon \mathrm{d} \Prob  - \Prob(B) \right \Vert_{L^1(\Omega)} \leq 4 \varepsilon.
	\end{align*}
	Since this holds for any $\varepsilon > 0$, we conclude that $\chi_B$ is constant a.s., and therefore $\Prob(B) \in \{0,1\}$. Thus $\Prob$ is ergodic.

	\noindent {\bf Non-averaging case.} If the lattices $\cR_1, \dots , \cR_p$ are not incommensurate, there exists a particular nonzero combination $(\bk_1, \dots, \bk_p) \in \cR_1^* \times \cdots \times \cR_p^*$ such that $\sum_{j = 1}^p \bk_j = \bzero$. Then, we have for the corresponding Fourier factor,
	\begin{align*}
		G_{\bk_1, \dots, \bk_p}(\mathtt{T}_{-\ba} \omega) = G_{\bk_1, \dots, \bk_p}(\omega), \qquad \forall (\ba, \omega) \in \R^d \times \Omega .
	\end{align*}
	The function $G_{\bk_1, \dots, \bk_p}$ is then invariant by the $\R^d$-action and not constant. Therefore the dynamical system $(\Omega,\R^d,\mathtt{T},\Prob)$ is not ergodic in this case.
\end{proof}
\begin{remark}
	When the lattices $\cR_1, \cdots , \cR_p$ are {\it commensurate}, i.e., $\bigcap_{j = 1}^p \cR_j$ forms a cocompact superlattice of $\R^d$, then the orbits through the action $\mathtt{T}$ of $\R^d$ on $\Omega$ are no longer dense in $\Omega$, rather they form lower dimensional submanifolds of $\Omega$ which are then distinct hulls for the possible configurations. In this case, Bloch theory allows one to study each of these nonequivalent configuration sets individually.
\end{remark}
\begin{remark}
	Note that the lattices $\cR_1, \cdots , \cR_p$ can also be neither commensurate nor incommensurate. We do not know if one can always find a sensible parameterization of the hull of a particular configuration in this case, since its orbit under $\R^d$ may not be dense in $\Omega$, nor has a simple geometry.
\end{remark}

\subsection{Disordered multilayer systems}\label{subsec:disorder}
The case of disordered multilayer systems is more involved. For the sake of both brevity and clarity, we consider the specific, but representative, example of a bilayer system whose bottom layer lays on a periodic substrate modeled by an external $\cR_0$-periodic potential, and whose top layer may have defects. We denote by $\cR_0$ the periodic lattice of the substrate, and by $\cR_1$ and $\cR_2$ the periodic lattices of the bottom and top layers respectively, and we assume that $\cR_0$, $\cR_1$ and $\cR_2$ are incommensurate in the sense of Definition~\ref{definition:incommensurability}. We then assume that the defects are such that
\begin{itemize}
\item the overall geometry of the system is not modified: the periodic lattice $\cR_2$ is still appropriate to describe the configurations of the system; the difference with the
case of "perfect" homogeneous systems dealt with in the previous section, is that the nuclear distribution inside the unit cells of the top layer is not the same in each cell;
\item in each cell in the top layer, the motif can be one of the following: $(m_2^{(0)},m_2^{(1)},\cdots,m_2^{(D)})$,
where $m_2^{(0)}$ is the periodic motif of the top layer in the absence of defects, and $m_2^{(1)},\cdots,m_2^{(D)}$ correspond to the different kinds of defects that can be observed.
For instance, in the case of a graphene layer for which each carbon atom can adsorb a hydrogen atom, there are two carbon atoms in each cell, $m_2^{(0)}$ corresponds to the case when no hydrogen atom is adsorbed, $m_2^{(1)}$ (respectively, $m_2^{(2)}$) to the case when only the first (respectively, second) carbon atom has adsorbed a hydrogen atom, and $m_2^{(3)}$ to the case when the two carbon atoms have adsorbed hydrogen atoms;
\item the defects are independently and identically distributed in the cells of the top layer. We denote by $(p_0,p_1,\cdots,p_D)$ (with $p_j > 0$ and $\sum_{j=0}^D p_j=1$) the probability distribution of the motifs $(m_2^{(0)},m_2^{(1)},\cdots,m_2^{(D)})$.
\end{itemize}
Without loss of generality, we can assume that the center of the bottom layer is contained in the plane $x_{d+1}=0$, and that the center of the top layer is contained in the plane $x_{d+1}=h >0$. For each point in $\br \in \R^d$, we introduce the decomposition of $\br$ associated with the lattice $\cR_2$ defined as
\begin{equation}\label{def:decomposition}
\br = \left [ \br \right ]_2 + \left \{ \br \right \}_2 \qquad \text{ with } \left [ \br \right ]_2 \in \cR_2 \text{ and } \left \{ \br \right \}_2 \in \widehat\Gamma_2=\bE_2 [-1/2, 1/2)^d.
\end{equation}

For this example, the hull is the dynamical system $(\Omega,\R^d,\tau,\Prob)$ where
\begin{itemize}
\item $\Omega= \Gamma_0 \times \Gamma_1 \times \Gamma_2 \times \left\{ 0,1,\cdots,D \right\}^{\cR_2}$, the configuration $\omega =(\bgamma_0,\bgamma_1,\bgamma_2,\left\{t_{\bn_2}\right\}_{\bn_2 \in \cR_2}) \in \Omega$ corresponding to the nuclear distribution
$$
\rho^{\rm nuc}_\omega= \sum_{\bp_1 \in \bgamma_1+\cR_1} T_{\bp_1}m_1 + \sum_{\bp_2 \in \bgamma_2+\cR_2} T_{\bp_2}m_2^{(t_{[\bp_2]_2})}(\cdot - h \be_{d+1}).
$$
Here $t_{[\bp_2]_2} \in \left\{ 0,1,\cdots,D \right\}$ is the type of the motif carried by the cell of the top layer centered at $\bp_2+h \be_{d+1}$. Note that $\gamma_0$ does not appear in the definition of $\rho^{\rm nuc}_\omega$, since we do not take into account the geometry relaxation of the bilayer system due to the interaction with the substrate. On the other hand, the substrate generates a $\cR_0$-periodic potential which modifies the electronic structure of the bilayer system, and this potential depends on $\gamma_0$, the relative position of the substrate with respect to the bilayer system;
\item $\tau$ is the action of the group $\R^d$ on $\Omega$ defined by
\begin{equation*}
\begin{aligned}
\forall \ba \in \R^d, \quad \forall &\omega=(\bgamma_0, \bgamma_1, \bgamma_2, \left\{ t_{\bn_2} \right\}_{\bn_2 \in \cR_2}), \\&\qquad
\tau_\ba(\omega)=(\bgamma_0+\ba, \bgamma_1+\ba, \bgamma_2+\ba, \left\{ t_{\bn_2-[\{\bgamma_2\}_2+\ba]_2} \right\}_{\bn_2 \in \cR_2});
\end{aligned}
\end{equation*}
\item denoting by $\mu$ the probability on the set $\left\{ 0,1,\cdots,D \right\}$ with law $(p_0,p_1,\cdots,p_D)$, the ergodic probability $\Prob$ on $\Omega$ is defined as
$$
\Prob = \Prob_0 \otimes \Prob_1 \otimes \Prob_2 \otimes \Prob_{\rm dis},
$$
where $\Prob_j$ is the uniform probability distribution on $\Gamma_j$, and $\Prob_{\rm dis}=\mu^{\otimes \cR_2}$ is the probability distribution on the disorder.
\end{itemize}

\section{C$^*$-algebra formalism for tight-binding models}
\label{sec:CSA}

In this section, we adapt to the tight-binding modeling of perfect multilayer atomic heterostructures the $C^*$-algebra formalism which was extensively used by Belissard and collaborators~\cite{bellissard1994noncommutative,bellissard2003coherent} to model and analyze transport in aperiodic solids.

In the framework of tight-binding models, it is appropriate (see Remark~\ref{rem:continuousvsdiscrete}) to use the following alternative definition of the hull:
\begin{itemize}
\item $\dps \Omega_{\rm D} =  \left\{1, \cdots, p \right\} \times \Omega$;
\item $\mathfrak{t}$ is the action of the group
$$
\mathbb{G}_{\rm D} =\Z/p\Z \times \R^d
$$
defined by: for all $a=(\alpha,\ba) \in \G_{\rm D}$, and all $(j,\bgamma_1,\cdots,\bgamma_p) \in \Omega_{\rm D}$,
\[
	\mathfrak{t}_{a} (j,\bgamma_1,\cdots,\bgamma_p) = ( j - \alpha,\bgamma_1+\ba,\cdots,\bgamma_p+\ba),
\]
where the $-$ sign in $j-\alpha$ refers to the natural action of $\Z/p\Z$ on $\left\{1, \cdots, p \right\}$ (if $\alpha=k +p\Z$, $j-\alpha=((j-k-1) \mod p)+1$);
\item $\Prob_{\rm D}$ is as usual the uniform probability distribution on $\Omega_{\rm D}$, and it is ergodic if and only if the lattices $\cR_1, \dots , \cR_p$ are incommensurate.
\end{itemize}
More explicitly, for non-disordered multilayers, $\Omega_{\rm D}$ consists of $p$ copies of $\Gamma_1 \times \cdots \times \Gamma_p$ indexed by $j \in \left\{1, \cdots, p \right\}$ such that the reference lattice site of the configuration is in layer $j$.

In the tight-binding representation, the quantum states are expanded on a finite number of orbitals in each periodic unit cell, for example a set of maximally localized Wannier orbitals~\cite{wannier1937structure,kohn1959analytic,marzari1997maximally}.
We therefore introduce, for each layer $k$, a finite set $\Xi_k$ of tight-binding orbitals per unit cell. For a given configuration $\omega = (j,\bgamma_1, \cdots, \bgamma_p)$, the tight-binding orbital of type $n \in \Xi_k$ associated with the lattice site of layer $k$ located at point $\bx+h_k \be_{d+1} \in \bgamma_k + \cR_k + h_k \be_{d+1}$ is indexed by the triplet $((k-j) \mod p,\bx,n) \in \Z/p\Z \times (\gamma_k+\cR_k) \times \Xi_k$. The integer $((k-j) \mod p)$, considered as a element of $\Z/p\Z$, accounts for the vertical jump (in terms of number of layers and modulo $p$) to go from layer $j$, which contains the origin in the configuration $\omega$, to layer $k$ (see also the graphical explanation in the one-dimensional case in Figure~\ref{fig:figure4}). This way of labeling the orbitals turns out to be well suited to the $C^*$-algebra formalism introduced in the following two subsections.

The infinite set $\Xi^\omega$ of all tight-biding orbitals of the $p$ layers is then indexed by:
\begin{equation}\label{def:DiscreteStates}
	\Xi^\omega = \bigcup_{k=1}^p   \{(k-j)\mod p\} \times (\bgamma_k + \cR_k) \times\Xi_k.
\end{equation}
In the special case when there is only one orbital per unit cell, each $\Xi_k$ only contains one element, and can therefore be omitted in the above definition of $\Xi^\omega$.

In the configuration $\omega$, the space of quantum states for the tight-binding model is
\begin{equation}\label{def:QuantumStates}
	\mathfrak{H}_\omega = \ell^2(\Xi^\omega),
\end{equation}
and observables such as the Hamiltonian are described as linear operators on $\mathfrak{H}_\omega$.

\subsection{Abstract setting}\label{subsec:abstractsetting}
Let us first briefly recall in this section the general formalism of groupoid $C^*$-algebras. We refer e.g. to~\cite{bellissard1994noncommutative} for a more in-depth presentation of these mathematical objects.
We further simplify the presentation by assuming that there is only one tight-binding orbital per unit cell in each layer, so that
\begin{equation}\label{eq:Xiomega}
\Xi^\omega \equiv \bigcup_{k=1}^p   \{(k-j)\mod p\} \times (\bgamma_k + \cR_k) \subset \G_D.
\end{equation}
Note that this is not a restriction of the formalism (see Remark~\ref{rem:MultipleOrbitals} below).

The first step is to construct a groupoid based on the canonical transversal $X$ of the hull $\Omega_D$:
\begin{equation}\label{def:transversal}
X = \left \{ \omega \in \Omega_D \ \vert\ \omega = (j,\bgamma_1, \dots, \bgamma_p); \; \bgamma_j = \bzero \right \}.
\end{equation}
The idea is that while $\Omega_D$ indexes the possible viewpoints from any point in the layer planes, each element of $X$ represents a possible unique viewpoint from the position of a lattice site, which is then chosen as the origin.
This is a more appropriate approach in the case of tight-binding models.
Associated with the transversal $X$ is the groupoid $\Gamma(X)$ defined as follows:
\begin{equation}\label{def:groupoid}
\Gamma(X) = \left \{ (\omega, a) \in X \times \G_D \; \vert \; \mathfrak{t}_{-a} \omega \in X \right \}.
\end{equation}
For $\omega=(j,\bgamma_1,\cdots,\bgamma_p) \in \Omega_D$ and $a=(\alpha,\ba) \in G_D$, $(\omega,a) \in \Gamma(X)$ if and only if $\bgamma_{j}=0$ and $\bgamma_{j+\alpha}=\ba$. Geometrically, $\Gamma(X)$ indexes all the possible jumps between two lattice sites in all possible configurations of the multilayer system. The physical space vector of the jump associated with $(\omega,a)$ is $\bA=\ba + (h_{j+\alpha}-h_j)\be_{d+1}$. The groupoid is equipped with a set of three operations, respectively the range $r:\ \Gamma(X) \to X$, source $s: \Gamma(X) \to X,$ and composition $\circ: \Gamma(X) \times \Gamma(X) \to \Gamma(X)$, satisfying:
\begin{equation}
r(\omega, a) = \omega, \qquad s(\omega, a) = \mathfrak{t}_{-a} \omega, \qquad (\omega, a) \circ (\mathfrak{t}_{-a} \omega, b) = (\omega, a + b).
\end{equation}
The fiber $\Gamma^{(\omega)}$ is defined as $r^{-1}(\{ \omega \})$ for any $\omega \in X$.\\

In a second step, we define the $^*$-algebra $\mathcal{A}_0$ of continuous functions with compact support defined on $\Gamma(X)$ with values in $\C$, endowed with the following composition laws and $^*$-operator:
\begin{equation}\label{def:algebra}
\begin{aligned}
(\lambda f ) (\omega, a) &= \lambda f(\omega, a), \\
(f + g ) (\omega, a) &= f(\omega, a) + g(\omega, a), \\
(f * g)(\omega, a) &= \sum_{(\omega, b) \in \Gamma^{(\omega)}} f(\omega, b) g(\mathfrak{t}_{-b} \omega, a - b),\\
f^*(\omega, a) &= \overline{f\left (\mathfrak{t}_{-a} \omega, -a\right )}.
\end{aligned}
\end{equation}
The $^*$-algebra $\mathcal{A}_0$ has an identity, denoted by $\mathbf{1}$:
\[
	\mathbf{1}(\omega, a) = \delta_{a,0}.
\]
\begin{remark}\label{rem:MultipleOrbitals}
	In the case when there are several atoms per unit cell, and/or each atom carries more than one orbital, the functions $f$ in \eqref{def:algebra} do not take their values in $\C$ but $f(\omega, a)$ is in $\C^{N_j \times N_{j+\alpha}}$ when $\omega = (j,\bgamma_1, \dots, \bgamma_p)$ and $a = (\alpha,\ba)$, where $N_k = \#(\Xi_k)$ is the total number of tight-binding orbitals carried by the atoms in the unit cell of layer $k$. Products of the form $f(\omega, b) g(\mathfrak{t}_{-b} \omega, a - b)$ should then be understood as matrix products.
\end{remark}
 The $^*$-algebra $\mathcal{A}_0$ can be mapped onto a $^*$-algebra of bounded linear operators acting on the space of quantum states $\mathfrak{H}_\omega = \ell^2(\Xi^\omega)$, see~\eqref{def:QuantumStates}, via the representation formula:
	\begin{equation}\label{eq:representationformula}
	\pi_\omega(f) \phi (x) = \sum_{(\omega, y) \in \Gamma^{(\omega)}} f \left (\mathfrak{t}_{-x} \omega, y - x \right ) \phi(y), \qquad \forall \phi \in \mathfrak{H}_\omega, \quad \forall x \in \Xi^\omega.
	\end{equation}
	In the above formula, we have implicitly used the fact that the set $\Xi^\omega$ defined by \eqref{eq:Xiomega} coincides with $\left\{ y \in \G_D \; \vert \; (\omega,y) \in \Gamma^{(\omega)}\right\}$. 

\begin{figure}[h]
\centering
\begin{minipage}{.49\textwidth}\centering
	\includegraphics[height=3.5cm]{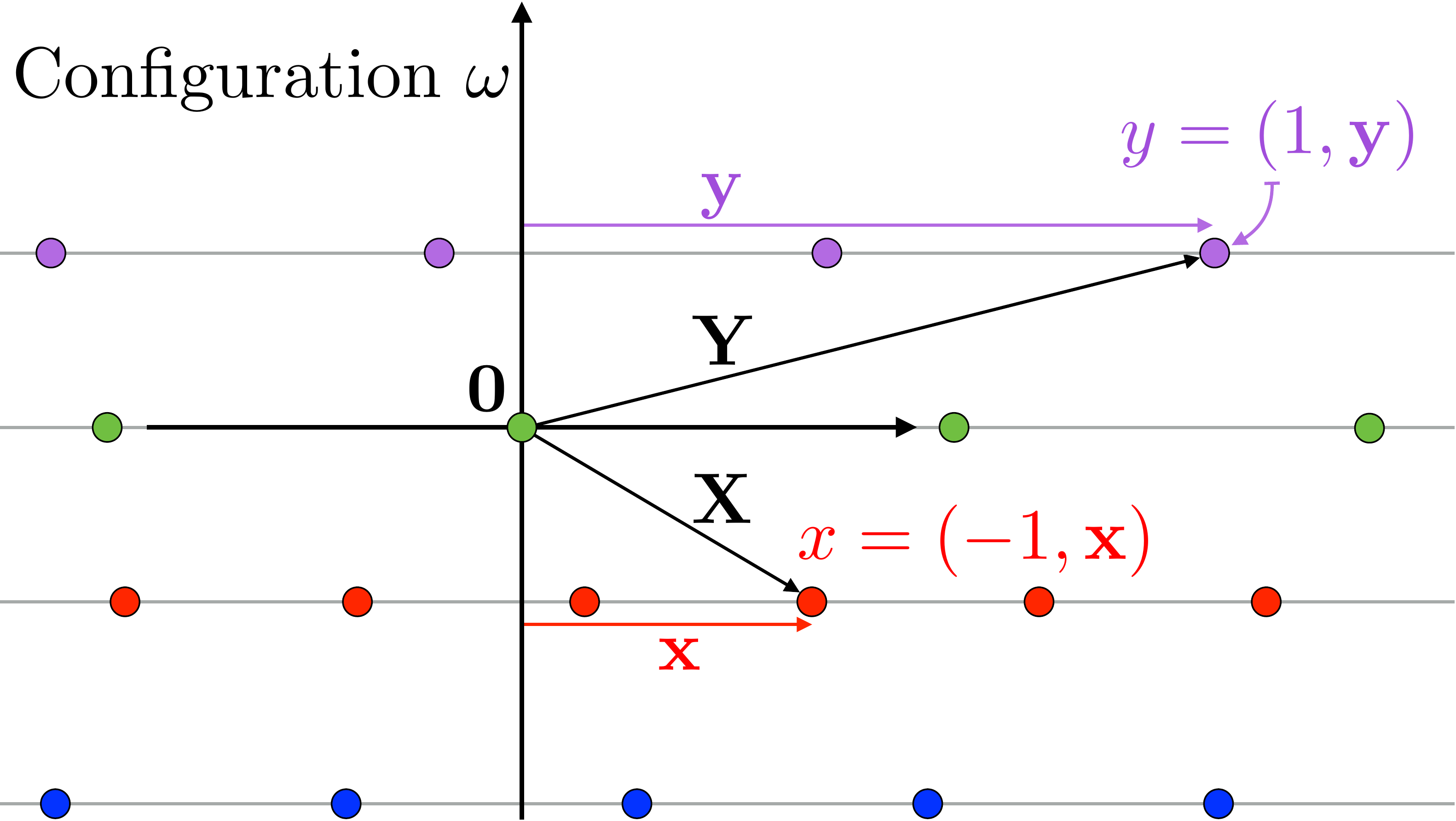}
\end{minipage}
\begin{minipage}{.49\textwidth}\centering
	\includegraphics[height=3.5cm]{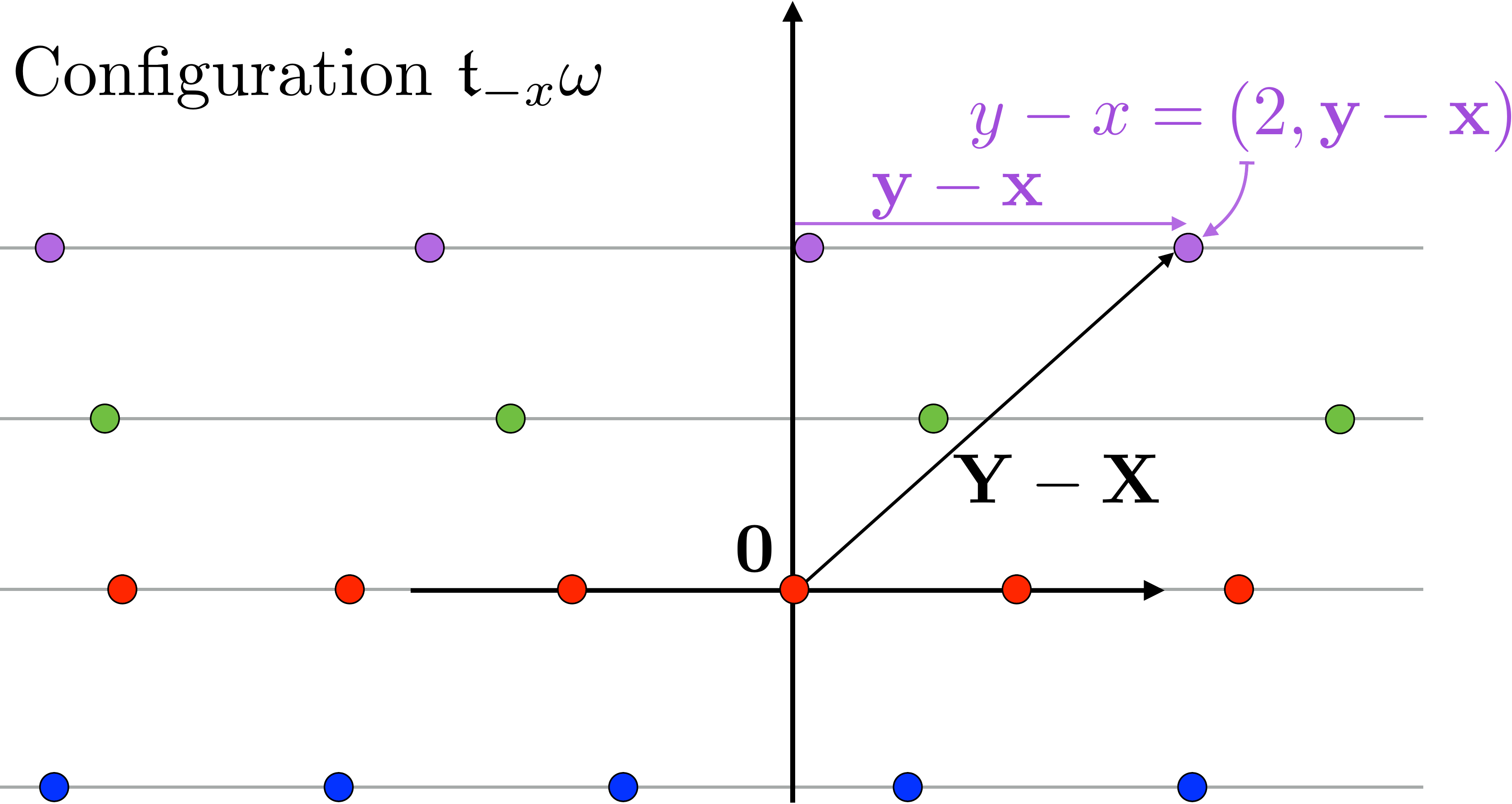}
\end{minipage}
\caption{Graphical explanation of the representation formula \eqref{eq:representationformula} for $d=1$, $p=4$, and for a configuration $\omega=(3,\bgamma_1,\bgamma_2,{\mathbf 0},\bgamma_4) \in \Omega_D$. The points $x=(2-3,\bx)=(-1,\bx)$ and $y=(4-3,\by)=(1,\by)$ of $\Xi^\omega$ are combined to form an element $y-x=(2,\by-\bx)$ such that $(\mathfrak{t}_{-x}\omega,y-x) \in \Gamma^{({\mathfrak{t}_{-x}}\omega)}$.}
\label{fig:figure4}
\end{figure}

	The following covariance condition holds:	for a given $\gamma = (\omega, a) \in \Gamma(X)$,
	\begin{equation}
	 \pi_{\mathfrak{t}_{-a}\omega}(f)=\mathfrak{T}(\gamma)^{-1}\pi_\omega(f) \mathfrak{T}(\gamma)  ,
	\end{equation}
	where the translations $\mathfrak{T}(\gamma): \mathfrak{H}_{\mathfrak{t}_{-\ba}\omega} \to \mathfrak{H}_\omega$ are unitary operators defined by
	\begin{equation}
		\mathfrak{T}(\gamma)\phi(x) = \phi(\mathfrak{t}_{-a}x), \qquad \forall \phi \in \mathfrak{H}_{\mathfrak{t}_{-a}\omega}, \quad \forall x \in \Xi^\omega.
	\end{equation}
\begin{remark}
	When $d = 2$, magnetic fields can be included in the description through a Peierls substitution term as follows. We assume that there is only one orbital per unit cell, and that the orbitals are carried by atoms located at the lattice sites. Let $\mathscr B$ be an antisymmetric $3 \times 3$ matrix representing the magnetic field.
	For $x=(l,\bx)$ and $y= (m,\by)$ in $\Xi^{\omega}$ with $\omega = (j,\bgamma_1, \dots, \bgamma_p)$, we define the corresponding Peierls phase term as
	\begin{equation}
		\Phi^\omega_{\mathscr B}(x, y) = \pi \sum_{\nu, \mu=1}^3 \mathrm{\mathscr  B}_{\nu, \mu} X_\nu Y_\mu,
	\end{equation}
	where $\mathbf{X} = \bx +( h_{j+l} - h_j)\be_{3}$ and $\mathbf{Y} = \by+( h_{j+m} - h_j)\be_{3}$ are the physical space vectors of the jumps from the origin to the sites $x$ and $y$.

	  The product and representation formulae are then modified as
	\begin{equation}\label{def:magnetic}
		\begin{aligned}
			(f \starb g)(\omega, a) &= \sum_{(\omega,b) \in \Gamma^{(\omega)}} f(\omega, b) g(\mathfrak{t}_{-b} \omega, a - b) e^{i \Phi^\omega_{\mathscr B}(a, b)},\\
			\pi_\omega(f) \phi (x) &= \sum_{(\omega,y) \in \Gamma^{(\omega)}} f \left (\mathfrak{t}_{-x} \omega, y - x \right ) e^{-i\Phi^\omega_{\mathscr B}(x, y)} \phi(y), \qquad \forall \phi \in \mathfrak{H}_\omega, x \in \Xi^{\omega},
		\end{aligned}
	\end{equation}
	where we again used the fact that $\Xi^\omega = \left\{ y \in \G_D \; \vert \; (\omega,y) \in \Gamma^{(\omega)}\right\}$.
	In this case, the following covariance condition holds:
	for a given $\gamma = (\omega, a) \in \Gamma(X)$,
	\begin{equation}
	\pi_{\mathfrak{t}_{-a}\omega}(f) = U(\gamma)^{-1} \pi_\omega(f) U(\gamma)    ,
	\end{equation}
	where the magnetic translations $U(\gamma): \mathfrak{H}_{\mathfrak{t}_{-a}\omega} \to \mathfrak{H}_\omega$ are unitary operators defined by
	\begin{equation}
	U(\gamma)\phi(x) = \mathrm{exp} \left ( i \int_{[\mathbf{X} - \mathbf{A}, \mathbf{X}]} \mathscr{A}(y) \cdot dy \right ) \phi(\mathfrak{t}_{-a}x), \qquad \forall \phi \in \mathfrak{H}_{\mathfrak{t}_{-a}\omega}, \quad \forall x \in \Xi^\omega.
	\end{equation}
	Here, $\mathscr{A}$ is a vector potential giving rise to the magnetic field $\mathscr  B$, and $[\mathbf{X} - \mathbf{A}, \mathbf{X}]$ is the line segment joining $\mathbf{X} - \mathbf{A}$ to $\mathbf{X}$ in $\R^3$.
\end{remark}
Note that $\pi_\omega(f)$ is hermitian if $f = f^*$.
This representation induces a $C^*$ norm on $\mathcal{A}_0$, defined by:
\[
\Vert f \Vert = \sup_{\omega \in X} \Vert \pi_\omega(f) \Vert,
\]
where the norm on the right hand side is the operator norm on $\mathcal{L}(\mathfrak{H}_\omega)$, the space of bounded linear operators on $\mathfrak{H}_\omega$. We then construct in a third and final step the $C^*$-algebra $\mathcal{A}$ as the completion of $\mathcal{A}_0$ for this norm.

\paragraph{Integro-differential calculus}
Integration and derivations can be constructed on the $C^*$-algebra as follows.
Let $\Prob$ be a probability measure on $X$ invariant by $\Gamma(X)$-action.
A formal integration is then obtained on $\mathcal{A}_0$ as follows: if $f \in \mathcal{A}_0$,
\begin{equation}\label{eq:trace}
\mathcal{T}_\Prob(f) = \int_X d \Prob(\omega) f(\omega, \bzero).
\end{equation}
This trace is positive, $\mathcal{T}_\Prob(f^* *f) \geq 0$. It is faithful ($\mathcal{T}_\Prob(f^* *f) = 0$ iff $f = 0$) when the support of the measure $\Prob$ is $X$.
\begin{remark}
	This trace coincides with the trace per unit volume in $\R^d$ of the corresponding operator $\pi_\omega(f)$ thanks to the Birkhoff property~\eqref{def:continuousBirkhoff}.
\end{remark}
One then defines $L^p$-norms by setting for $1 \leq p < \infty$,
\begin{equation}
	\Vert f \Vert_p = \Trace{ (f * f^*)^{p/2}}^{1/p} \qquad \forall f \in \mathcal{A}_0.
\end{equation}
Closure of $\mathcal{A}_0$ with respect to the $L^p$-norms defines Banach spaces $L^p(\mathcal{A}, \mathcal{T}_\Prob)$.
Derivations are introduced next by setting
\begin{equation}\label{eq:derivation}
	\partial_j f(\omega, x) = i x_j f(\omega, x), \qquad f \in \mathcal{A}_0.
\end{equation}

Given a multi-index $\boldsymbol{\alpha} = (\alpha_1, \dots, \alpha_d)$ we will use the notation $\partial_{\boldsymbol{\alpha}} f$ for $\partial_1^{\alpha_1} \dots \partial_d^{\alpha_d}$. The derivations defined above satisfy the fundamental properties of derivation operators: they commute, they are $^*$-derivations in the sense that $\partial_j (f^*) = (\partial_j f)^*$, and satisfy the Leibniz rule $\partial_j(f*g)=\partial_jf * g+ f * \partial_jg$ . Finally one has the operator representation:
\begin{equation}
	\pi_\omega(\partial_j f) = -i \left [ x^\omega_j, \pi_\omega f \right ],
\end{equation}
where $\mathbf{x}^\omega = (x^\omega_1, \dots, x^\omega_d)$ is the position operator on $\cL^\omega$.
\begin{remark}
	In the periodic case, these derivations correspond (by Fourier transform) to derivation in quasi-momentum space.
\end{remark}

\paragraph{Analytic spectral calculus}
A basic notion in operator algebras is that of the resolvent sets, the spectrum, and the introduction of a spectral calculus by a complex contour integral. Since the operators we are considering here are bounded, this is straightforward and we refer e.g. to~\cite{Arveson2006} for the details. For a given element $f \in \mathcal{A}$, its resolvent set $\rho(f)$ and spectrum $\sigma(f)$ are
\[
	\rho(f) = \{ z \in \C \quad \text{s.t.} \quad ( z \mathbf{1} - f ) \text{ is invertible} \}, \qquad \sigma(f) = \C \setminus \rho(f).
\]
Furthermore, the resolvent set is open in $\C$, and the resolvent function $z \mapsto (z \mathbf{1} - f)^{-1}$ is an algebra-valued analytic function of $z \in \rho(f)$. An analytical calculus can then be defined on $\mathcal{A}$ as follows. Let $\mathcal{F}_{\sigma(f)}$ by the algebra of $\C$-valued functions which are analytic in an open neighborhood of $\sigma(f)$. Then we define a homomorphism of algebras
\begin{equation}\label{def:AnalyticalCalculus}
	\mathcal{F}_{\sigma(f)} \owns \phi \mapsto \frac{1}{2i\pi} \oint_\mathcal{C} \phi(z) (z \mathbf{1} - f)^{-1} \mathrm{d}z \in \mathcal{A},
\end{equation}
where $\mathcal{C}$ is a contour surrounding $\sigma(f)$ in the analyticity domain of $\phi$. The right-hand side of this mapping is independent of the particular choice of contour, and usually noted as $\phi(f)$.

\subsection{Perfect bilayers}\label{sec:bilayers}
Let us identify further the groupoid and the associated $C^*$-algebra in the specific geometry of a bilayer system with no disorder.

\paragraph{Transversal and groupoid.}
We start by identifying the transversal in the case of the hull $\Omega_D$ of a perfect incommensurate bilayer. We have
\[
\Omega_D = \left\{1,2\right\} \times \Gamma_1 \times \Gamma_2,
\]
and
\[
X = \left \{ (1,\bzero, \bgamma_2)\ ; \; \bgamma_2 \in \Gamma_2 \right \} \cup \left \{ ( 2,\bgamma_1, \bzero)\ ; \; \bgamma_1 \in \Gamma_1 \right \}.
\]
This leads to the identification:
\begin{equation}\label{def:transversal2}
X \equiv X_1 \cup X_2 \qquad \text{ where } \qquad \begin{cases}
X_1 = \Gamma_2,\\
X_2 =  \Gamma_1.
\end{cases}
\end{equation}
The set $X_1$ (resp. $X_2$) describes the possible configurations from the point of view of a given lattice site of layer $\cL_1$ (resp. $\cL_2$). In a configuration $\bgamma_2 \in X_1$, which corresponds to $\omega=(1,\bzero, \bgamma_2)$, the lattice sites of $\cL_1$ are located at $\cR_1$ while the lattice sites of $\cL_2$ are located at $\bgamma_2+\cR_2+h\be_{d+1}$, where $h$ is the distance between the two layers. In a configuration $\bgamma_1 \in X_2$, which corresponds to $\omega=(2, \bgamma_1,\bzero)$, the lattice sites of $\cL_2$ are located at $\cR_2$ while the lattice sites of $\cL_1$ are located at $\bgamma_1+\cR_1-h\be_{d+1}$.

The decomposition~\eqref{def:transversal2} of the transversal allows us to identify a block decomposition of the groupoid:
\begin{equation}\label{def:groupoid2}
\Gamma(X) \equiv \tGamma_{11} \cup \tGamma_{12} \cup \tGamma_{21} \cup \tGamma_{22} = \tGamma(X_1, X_2),
\end{equation}
where the set of arrows $\tGamma_{11}$ and $\tGamma_{22}$ include all intralayer jumps:
\begin{equation*}
	\begin{cases}
		\tGamma_{11} &= \left \{ \left (\bgamma_2,  \bm \right) ; \;  \bgamma_2 \in X_1, \;  \bm \in \cR_1 \right \},\\
		\tGamma_{22} &= \left \{ \left (\bgamma_1,  \bn \right) ; \;  \bgamma_1 \in X_2, \;  \bn \in \cR_2 \right \},
	\end{cases}
\end{equation*}
while $\tGamma_{21}$ describes jumps from the first to the second layer, and $\tGamma_{12}$ jumps from the second to the first layer.
Let us note that the above representation would be redundant in the case of interlayer jumps, e.g., for $\left (\bgamma_2,  \bq \right)$ in $\tGamma_{12}$. Indeed, $\bgamma_2 \in X_1$ can be deduced from $\bq \in \bgamma_2 + \cR_2$ as its equivalence class modulo $\cR_2$. We will thus denote these arrows solely as arrow vectors $\tq$, observing that these can take their values in all of $\R^d$:
\begin{equation}
	\begin{cases}
		\tGamma_{12} &= \left \{ \tq ; \; \bq \in \R^d \right \}, \\
		\tGamma_{21} &= \left \{ \tp ; \; \bp \in \R^d \right \}.
	\end{cases}
\end{equation}
This allows us to make explicit the three groupoid operations, namely the range map
\begin{equation}\label{def:range2}
r: \tGamma \to \tGamma, \quad
\left \{ \begin{aligned}
 \tGamma_{11} \to X_1, && (\bgamma_2, \bm) &\mapsto \bgamma_2, \\
 \tGamma_{12} \to X_1, && \tq &\mapsto \class2{\bq}, \\
 \tGamma_{21} \to X_2, && \tp &\mapsto \class1{\bp}, \\
 \tGamma_{22} \to X_2, && (\bgamma_1, \bn) &\mapsto \bgamma_1,
\end{aligned} \right.,
\end{equation}
the source map:
\begin{equation}\label{def:source2}
s: \tGamma \to \tGamma, \quad
\left \{ \begin{aligned}
 \tGamma_{11} \to X_1, && (\bgamma_2, \bm) &\mapsto \bgamma_2 - \class2{\bm}, \\
 \tGamma_{12} \to X_2, && \tq &\mapsto - \class1{\bq}, \\
 \tGamma_{21} \to X_1, && \tp &\mapsto  - \class2{\bp}, \\
 \tGamma_{22} \to X_2, && (\bgamma_1, \bn) &\mapsto \bgamma_1 -  \class1{\bn},
\end{aligned} \right.
\end{equation}
and the composition map:
\begin{equation}\label{def:composition2}
\left \{ \begin{aligned}
\text{For } (\bgamma_2, \bm) \in \tGamma_{11}, && \bm' \in \cR_1:
	&& (\bgamma_2, \bm) &\circ (\bgamma_2 - \class2{\bm}, \bm') &&= (\bgamma_2, \bm + \bm') &&\in \tGamma_{11},\\
\text{for }  \tq \in \tGamma_{12}, && \bp + \bq\in \cR_1:
 	&& \tq &\circ \tp &&= (\class2{\bq}, \bq + \bp)  &&\in \tGamma_{11}, \\
\text{for }  \tp \in \tGamma_{21}, && \bq  + \bp \in \cR_2:
	&& \tp &\circ \tq &&= (\class1{\bp}, \bp + \bq)  &&\in \tGamma_{22}, \\
\text{for }  (\bgamma_1, \bn) \in \tGamma_{22}, && \bn' \in \cR_2:
 	&& (\bgamma_1, \bn) &\circ  (\bgamma_1 - \class1{\bn}, \bn')  &&= (\bgamma_1, \bn + \bn')  &&\in \tGamma_{22},\\
\text{for }  \tp \in \tGamma_{21}, && \bm \in \cR_1:
	&& \tp &\circ (-\class2{\bp}, \bm) &&= \overrightarrow{\bp + \bm}  &&\in \tGamma_{21},\\
\text{for }  (\bgamma_1, \bn) \in \tGamma_{22}, && \bp + \bn \in \bgamma_1 + \cR_1:
 	&& (\bgamma_1, \bn) &\circ \tp &&= \overrightarrow{\bn + \bp}  &&\in \tGamma_{21},\\
\text{for } (\bgamma_2, \bn) \in \tGamma_{11}, && \bq + \bn \in \bgamma_2 + \cR_2:
	&& (\bgamma_2, \bm) &\circ \tq &&= \overrightarrow{\bm + \bq} &&\in \tGamma_{12},\\
\text{for }  \tq \in \tGamma_{12}, && \bn \in \cR_2:
 	&& \tq &\circ ( - \class1{\bq}, \bn) &&= \overrightarrow{\bq + \bn}  &&\in \tGamma_{12},
\end{aligned} \right.
\end{equation}
where we have introduced the notation $\class{j}{\br}=\br+\cR_j$, $j = 1 \dots p$ to denote the equivalence classes in $\Gamma_j$ of a given point $\br\in \R^d$.
Finally, the fiber $\Gamma^{(\omega)}$ can be determined:
\begin{subequations}
\begin{align}
\tGamma^{(\bgamma_2)} = \Gamma^{(1,\bzero, \bgamma_2)} \equiv \cL^{(1,\bzero, \bgamma_2)}, \quad \text{for } \bgamma_2 \in X_1,\\
\tGamma^{(\bgamma_1)} = \Gamma^{(2,\bgamma_1, \bzero)} \equiv \cL^{(2,\bgamma_1, \bzero)}, \quad \text{for } \bgamma_1 \in X_2.
\end{align}
\end{subequations}
where $\cL^\omega$ is the set of points defined by~\eqref{def:parameterizationatomicset}.

We obtain for the groupoid the discrete counterpart to the ergodic property of the continuous hull, Proposition~\ref{prop:continuousHullErgodicity}. Let $\mathrm{d}\bgamma$ denote the usual Lebesgue measure on $X_1 = \Gamma_2$ and $X_2 = \Gamma_1$.
\begin{proposition}\label{prop:discreteHullErgodicity}
	Let $\Prob$ be the probability measure on $X$ with uniform density $(\vert \Gamma_1 \vert + \vert \Gamma_2 \vert)^{-1} \mathrm{d}\bgamma$.
	\begin{enumerate}
	\item $\Prob$ is invariant by the groupoid action.
	\item The dynamical system $(X,\Gamma(X), \mathfrak{t}, \Prob)$ is uniquely ergodic if and only if the lattices $\cR_1, \cdots , \cR_p$ are incommensurate. In this case, we have the Birkhoff property: for any $f \in C(X)$ and $\omega \in X$,
	\begin{equation}\label{def:discreteBirkhoff}
		\lim_{r \to \infty} \frac{1}{\# \left ( B_r \cap \cL^\omega \right )} \sum_{\ba \in B_r \cap \cL^\omega } f(\mathtt{t}_{-\ba} \omega) 
= \int_X f \mathrm{d}\Prob,
	\end{equation}
	where $B_r$ is the ball of radius $r$ centered at the origin.
\end{enumerate}
\end{proposition}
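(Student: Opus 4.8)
My plan is to imitate the proof of Proposition~\ref{prop:continuousHullErgodicity}, reducing everything to the equidistribution Lemma~\ref{lem:ergodicity1}. I carry out the argument in the bilayer case $p=2$, for which the transversal has the explicit form~\eqref{def:transversal2}, $X = X_1 \sqcup X_2$ with $X_1 = \Gamma_2$ and $X_2 = \Gamma_1$; the general case is entirely analogous. For point~1, I would note that $\Gamma(X)$ is generated by the intralayer arrows, which act on $X_1$ (resp.\ $X_2$) by the torus translations $\bgamma_2 \mapsto \bgamma_2 - \class{2}{\bm}$, $\bm \in \cR_1$ (resp.\ $\bgamma_1 \mapsto \bgamma_1 - \class{1}{\bn}$, $\bn \in \cR_2$), and by the interlayer arrows, which along any bisection are realized by maps $\Gamma_2 \supset U \to \Gamma_1$ of the form $\bgamma_2 \mapsto -\bgamma_2 - \bc \bmod \cR_1$ that lift to affine maps of $\R^d$ with unit Jacobian. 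Since each generator preserves the normalized Lebesgue measure $(\vert\Gamma_1\vert + \vert\Gamma_2\vert)^{-1}\,\mathrm{d}\bgamma$ on the relevant component, $\Prob$ is invariant under the groupoid action.

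For the ``if'' direction of point~2, assume $\cR_1, \cR_2$ incommensurate, which for $p=2$ reads $\cR_1^* \cap \cR_2^* = \{\mathbf 0\}$. Fix $\omega \in X$; by symmetry take $\omega = (1, \bzero, \bgamma_2) \in X_1$, so that $\cL^\omega = \cR_1 \cup (\bgamma_2 + \cR_2 + h\be_{d+1})$ by~\eqref{def:parameterizationatomicset}. I would split the Birkhoff sum~\eqref{def:discreteBirkhoff} according to the layer of $\ba$. For a layer-one site $\ba \in \cR_1 \cap B_r$, unwinding the $\G_D$-action gives $\mathfrak{t}_{-\ba}\omega = (1, \bzero, \class{2}{\bgamma_2 - \ba}) \in X_1$, so this part of the sum samples $f$ restricted to $X_1$ at the points $\{\bgamma_2 - \ba \bmod \cR_2 \;:\; \ba \in \cR_1 \cap B_r\}$; for a layer-two site $\ba = \bgamma_2 + \bn + h\be_{d+1}$ with $\bn \in \cR_2$, the vertical jump has class $1 \in \Z/2\Z$ and $\mathfrak{t}_{-\ba}\omega = (2, \class{1}{-\bgamma_2 - \bn}, \bzero) \in X_2$, so this part samples $f$ restricted to $X_2$ at $\{-\bgamma_2 - \bn \bmod \cR_1 \;:\; \bn \in \cR_2,\ \vert\bgamma_2 + \bn\vert \le \sqrt{r^2 - h^2}\}$. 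Approximating each restriction of $f$ by trigonometric polynomials and applying Lemma~\ref{lem:ergodicity1} mode by mode --- incommensurability ensuring that any nonzero $\bk \in \cR_2^*$ lies outside $\cR_1^*$, and symmetrically --- I expect to obtain, uniformly in $\bgamma_2$ since the shifts contribute only unimodular phases, that $\#(\cR_1 \cap B_r)^{-1}\sum_{\ba \in \cR_1 \cap B_r} f(\mathfrak{t}_{-\ba}\omega) \to \vert\Gamma_2\vert^{-1}\int_{X_1} f\,\mathrm{d}\bgamma$ and $\#(\cL_2^\omega \cap B_r)^{-1}\sum_{\ba \in \cL_2^\omega \cap B_r} f(\mathfrak{t}_{-\ba}\omega) \to \vert\Gamma_1\vert^{-1}\int_{X_2} f\,\mathrm{d}\bgamma$ as $r \to \infty$.

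Combining these with the standard lattice-point asymptotics $\#(\cR_1 \cap B_r) \sim \vert B_r^d\vert/\vert\Gamma_1\vert$ and $\#(\cL_2^\omega \cap B_r) \sim \vert B^d_{\sqrt{r^2-h^2}}\vert/\vert\Gamma_2\vert \sim \vert B_r^d\vert/\vert\Gamma_2\vert$ (the vertical offset $h$ being asymptotically irrelevant), the two weights $\#(\cR_1 \cap B_r)/\#(\cL^\omega \cap B_r)$ and $\#(\cL_2^\omega \cap B_r)/\#(\cL^\omega \cap B_r)$ converge to $\vert\Gamma_2\vert/(\vert\Gamma_1\vert + \vert\Gamma_2\vert)$ and $\vert\Gamma_1\vert/(\vert\Gamma_1\vert + \vert\Gamma_2\vert)$, so that the full average converges, uniformly in $\omega$, to $(\vert\Gamma_1\vert+\vert\Gamma_2\vert)^{-1}\big(\int_{X_1} f\,\mathrm{d}\bgamma + \int_{X_2} f\,\mathrm{d}\bgamma\big) = \int_X f\,\mathrm{d}\Prob$, which is~\eqref{def:discreteBirkhoff}; uniform convergence of the Birkhoff averages of every $f \in C(X)$ to this constant gives unique ergodicity. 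For the ``only if'' direction, if the lattices are not incommensurate I would choose $\bk \in (\cR_1^* \cap \cR_2^*) \setminus \{\mathbf 0\}$ and set $g := e^{2i\pi\bk\cdot\bgamma_2}$ on $X_1$, $g := e^{-2i\pi\bk\cdot\bgamma_1}$ on $X_2$; since $\bk$ lies in both dual lattices, $g \in C(X)$ is well defined, and a direct inspection of the range and source maps~\eqref{def:range2}--\eqref{def:source2} shows $g \circ r = g \circ s$ on $\Gamma(X)$, i.e.\ $g$ is constant along groupoid orbits, so its Birkhoff averages equal $g(\omega)$ identically, which is non-constant since $\bk \neq \mathbf 0$; hence the system is not uniquely ergodic.

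Most of this is routine once the setup is fixed. The one point that I expect to require genuine care is the bookkeeping of the $\Z/2\Z$-component of $\mathfrak{t}$ when the reference site is moved to a layer-two atom: this is exactly what forces the translated configuration into $X_2$ rather than $X_1$, so that the Birkhoff sum naturally samples both components of the transversal, and one must then check that the two normalizations $\vert\Gamma_2\vert^{-1}$ and $\vert\Gamma_1\vert^{-1}$ recombine with the layer-occupation frequencies into precisely the density $(\vert\Gamma_1\vert + \vert\Gamma_2\vert)^{-1}$. The equidistribution input and the lattice-counting estimates are, up to these adjustments, the same as in Proposition~\ref{prop:continuousHullErgodicity}.
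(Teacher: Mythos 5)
Your proof is correct and follows exactly the route the paper intends: the authors omit the argument, stating only that it ``follows the same lines'' as Proposition~\ref{prop:continuousHullErgodicity}, and your adaptation --- trigonometric-polynomial approximation on each component of the transversal, equidistribution via Lemma~\ref{lem:ergodicity1} under the condition $\cR_1^*\cap\cR_2^*=\{\bzero\}$, and the recombination of the two normalizations $\vert\Gamma_1\vert^{-1}$, $\vert\Gamma_2\vert^{-1}$ with the layer-occupation frequencies into the density $(\vert\Gamma_1\vert+\vert\Gamma_2\vert)^{-1}$ --- is precisely the bookkeeping that adaptation requires. The tracking of the $\Z/2\Z$-component of $\mathfrak{t}$, which sends the Birkhoff sum alternately into $X_1$ and $X_2$, and the invariant function $g$ built from $\bk\in(\cR_1^*\cap\cR_2^*)\setminus\{\bzero\}$ for the converse, are both handled correctly.
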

The proof of this result follows the same lines as the proof of Proposition~\ref{prop:continuousHullErgodicity}. We therefore omit it for the sake of brevity.

\paragraph{Bilayer $C^*$-algebra.}
For the case of periodic incommensurate bilayers with one orbital per unit cell, we can give a comprehensive description of the abstract algebra defined above. First, elements of $\mathfrak{H}_\omega = \ell^2(\Xi^\omega)$ can be seen as:
\begin{itemize}
\item elements of $\mathfrak{H}_{\bgamma_2} = \ell^2(\cR_1) \oplus \ell^2(\bgamma_2 + \cR_2)$ for $\omega \equiv \bgamma_2 \in X_1$,
\item or elements of $\mathfrak{H}_{\bgamma_1} = \ell^2(\bgamma_1+\cR_1) \oplus \ell^2(\cR_2)$ for $\omega \equiv \bgamma_1 \in X_2$.
\end{itemize}
Then, given the decomposition \eqref{def:groupoid2} of $\Gamma(X)$, it makes sense to write a block decomposition of functions $f \in C^*(\Gamma(X))$ as:
\begin{equation}\label{def:BlockDecomposition}
f = \begin{bmatrix}
f_{11} & f_{12} \\ f_{21} & f_{22}
\end{bmatrix}
\end{equation}
where
\begin{equation}\label{def:BlockCoefficients}
\begin{aligned}
f_{11}: \quad & \tGamma_{11} = \Gamma_2 \times \cR_1 \to \C, \\
f_{12}: \quad & \tGamma_{12} \equiv \R^d \to \C, \\
f_{21}: \quad & \tGamma_{21} \equiv \R^d \to \C, \\
f_{22}: \quad & \tGamma_{22} = \Gamma_1 \times \cR_2 \to \C.
\end{aligned}
\end{equation}
Note that the decomposition~\eqref{def:BlockDecomposition} of the tight-binding hopping parameters into intra- and inter-layer terms is naturally the parameterization used in physics~\cite{Fang2015}. In particular, inter-layer coefficients are usually represented directly as a continuous function of the relative position of the atoms as in~\eqref{def:BlockCoefficients}, see e.g.,~\cite{Fang2016}.
\begin{subequations}
Let us now write the $*$ product defining the algebra:
\begin{itemize}
\item for $\bgamma_2 \in X_1$ and $\bm \in \cR_1$,
\begin{equation}
\begin{aligned}
(f * g)_{11}(\bgamma_2, \bm) = & \sum_{\bm' \in \cR_1} f_{11}(\bgamma_2, \bm') g_{11}(\bgamma_2 - \class2{\bm'}, \bm - \bm') \\ &+  \sum_{\bq' \in \bgamma_2 + \cR_2} f_{12}(\bq') g_{21}(\bm - \bq');
\end{aligned}
\end{equation}
\item for $\bq \in \R^2$,
\begin{equation}
\begin{aligned}
(f * g)_{12}(\bq) = & \sum_{\bm' \in \cR_1} f_{11}(\bgamma_2, \bm') g_{12}(\bq - \bm')  \\ &+  \sum_{\bn' \in \cR_2} f_{12}(\bq - \bn') g_{22}(\class1{\bn' - \bq}, \bn');
\end{aligned}
\end{equation}
\item for $\bp \in \R^2$,
\begin{equation}
\begin{aligned}
(f * g)_{21}(\bp) = & \sum_{\bn' \in \cR_2} f_{22}(\class1{\bp}, \bn') g_{21}(\bp - \bn')\\
& +  \sum_{\bm' \in \cR_1} f_{21}(\bp - \bm') g_{11}(\class2{\bm' - \bp}, -\bm');
\end{aligned}
\end{equation}
\item for $\bgamma_1 \in X_2$ and $\bn \in \cR_2$,
\begin{equation}
\begin{aligned}
(f * g)_{22}(\bgamma_1, \bn) = & \sum_{\bn' \in \cR_2} f_{22}(\bgamma_1, \bn') g_{22}(\bgamma_1 - \class1{\bn'}, \bn - \bn') \\
& +  \sum_{\bp' \in \bgamma_1 + \cR_1} f_{21}(\bp') g_{12}(\bn - \bp').
\end{aligned}
\end{equation}
\end{itemize}
\end{subequations}
We can also write the $^*$ operation as follows: for $\bgamma_2 \in X_1$, $\bgamma_1 \in X_2$ and $\bm \in \cR_1$, $\bq \in \bgamma_2$, $\bp \in \bgamma_1$, $\bn \in \cR_2$:
\begin{equation}
\begin{aligned}
(f^*)_{11}(\bgamma_2, \bm) &= \overline{f_{11}(\bgamma_2 - \class2{\bm}, -\bm)}, & (f^*)_{12}(\bq) &= \overline{f_{21}(-\bq)}, \\
(f^*)_{21}(\bp) &= \overline{f_{12}(- \bp)}, \hspace{1cm} & (f^*)_{22}(\bgamma_1, \bn) &= \overline{f_{22}(\bgamma_1 - \class1{\bn}, -\bn)}.
\end{aligned}
\end{equation}
Finally, the representation formula writes as follows:
\begin{enumerate}
\item for $\bgamma_2 \in X_1$ and $\phi  = (\phi_1, \phi_2)\in  \ell^2(\cR_1) \oplus \ell^2(\bgamma_2 + \cR_2)$, $\pi_{\bgamma_2} (f) \phi$ can be decomposed in $\ell^2(\cR_1) \oplus \ell^2(\bgamma_2 + \cR_2)$ as
\begin{equation}\label{eq:RepresentationFormulaBilayer}
\left \{
\begin{aligned}
\left ( \pi_{\bgamma_2} (f) \phi\right )_1 (\bm) &= \sum_{\bm' \in \cR_1} f_{11} \left (\bgamma_2 - \class2{\bm}, \bm' - \bm \right ) \phi(\bm') + \sum_{\bq' \in \bgamma_2 + \cR_2} f_{12} \left (\bq' - \bm \right )  \phi(\bq'), \\
\left ( \pi_{\bgamma_2} (f) \phi\right )_2 (\bq) &= \sum_{\bm' \in \cR_1} f_{21} \left (\bm' - \bq \right )  \phi(\bm') + \sum_{\bn' \in \cR_2} f_{22} \left (- \class1{\bq}, \bn' \right )  \phi(\bq+\bn'), \\
&  \text{ for all }  \bm \in \cR_1 \text{ and } \bq \in \bgamma_2 + \cR_2;
\end{aligned}
\right.
\end{equation}
\item for $\bgamma_1 \in X_2$ and $\phi  = (\phi_1, \phi_2)\in  \ell^2(\bgamma_1 + \cR_1) \oplus \ell^2(\cR_2)$, $\pi_{\bgamma_1} (f) \phi$ can be decomposed in $\ell^2(\bgamma_1 + \cR_1) \oplus \ell^2(\cR_2)$ as
\begin{equation}
\left \{
\begin{aligned}
\left ( \pi_{\bgamma_1} (f) \phi\right )_1 (\bp) &= \sum_{\bm' \in \cR_1} f_{11} \left (-\class2{\bp}, \bm' \right ) \phi(\bp+\bm') + \sum_{\bn' \in \cR_2} f_{12} \left (\bn' - \bp \right ) \phi(\bn'), \\
\left ( \pi_{\bgamma_1} (f) \phi\right )_2 (\bn) &= \sum_{\bp' \in \bgamma_1+\cR_1} f _{21}\left (\bp' - \bn \right ) \phi(\bp') + \sum_{\bn' \in \cR_2} f_{22} \left (\bgamma_1 - \class1{\bn}, \bn' - \bn \right ) \phi(\bn'), \\
& \text{ for all } \bp \in \bgamma_1 + \cR_1 \text{ and } \bn \in \cR_2.
\end{aligned}
\right.
\end{equation}
\end{enumerate}

\paragraph{Trace per unit volume and integro-differential calculus.}

A classical consequence~\cite{bellissard1994noncommutative} of the ergodicity of $\Prob$ on $X$ under the action of $\Gamma(X)$, Proposition~\ref{prop:discreteHullErgodicity}, is that we can characterize uniquely $\Prob$ under the condition that it is a trace per unit volume:
\begin{proposition}
When $\cR_1$ and $\cR_2$ are incommensurate, the invariant, ergodic probability measure $\Prob$ is uniquely defined as a trace per unit volume in the sense that it satisfies: for any $f \in C^*(\Gamma(X))$ and $\omega \in X$,
\begin{equation}\label{eq:traceergodicity}
\mathcal{T}_\Prob(f) = \lim_{r \to \infty} \frac{1}{\# \left ( B_r \cap \cL^\omega \right ) } \mathrm{Tr} \left (\pi_\omega(f) \vert_{B_r} \right ).
\end{equation}
\end{proposition}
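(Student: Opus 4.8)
The plan is to establish~\eqref{eq:traceergodicity} first on the dense $^*$-subalgebra $\mathcal{A}_0 \subset \mathcal{A} = C^*(\Gamma(X))$ of continuous compactly supported functions on the groupoid, where it reduces to the Birkhoff property, and then to propagate it to all of $\mathcal{A}$ by a density argument. For the first step, fix $f \in \mathcal{A}_0$ and $\omega \in X$. Evaluating the representation formula~\eqref{eq:representationformula} on a canonical basis vector $\delta_x$, $x \in \Xi^\omega$, and reading off the diagonal matrix element gives $\langle \delta_x, \pi_\omega(f)\,\delta_x \rangle = f(\mathfrak{t}_{-x}\omega, \bzero)$, so the diagonal of $\pi_\omega(f)$ consists precisely of the values of $f$ on the unit space of $\Gamma(X)$ transported along the orbit of $\omega$. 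Since $\pi_\omega(f)\vert_{B_r}$ is the compression of $\pi_\omega(f)$ to the finite-dimensional subspace $\ell^2(\Xi^\omega \cap B_r)$, identifying $\Xi^\omega$ with the atomic set $\cL^\omega$ yields
\[
	\mathrm{Tr}\left(\pi_\omega(f)\vert_{B_r}\right) = \sum_{\ba \in \cL^\omega \cap B_r} g(\mathfrak{t}_{-\ba}\omega), \qquad g(\omega) := f(\omega, \bzero).
\]
Because $(\omega, \bzero) \in \Gamma(X)$ for every $\omega \in X$, continuity of $f$ on $\Gamma(X)$ forces $g \in C(X)$; recalling that $\mathcal{T}_\Prob(f) = \int_X f(\omega, \bzero)\,\mathrm{d}\Prob(\omega)$ from~\eqref{eq:trace}, the discrete Birkhoff property~\eqref{def:discreteBirkhoff} of Proposition~\ref{prop:discreteHullErgodicity} then gives
\[
	\lim_{r \to \infty} \frac{1}{\#\left(B_r \cap \cL^\omega\right)}\, \mathrm{Tr}\left(\pi_\omega(f)\vert_{B_r}\right) = \int_X g\,\mathrm{d}\Prob = \mathcal{T}_\Prob(f),
\]
which is~\eqref{eq:traceergodicity} on $\mathcal{A}_0$.

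To handle a general $f \in \mathcal{A}$, I would exploit two norm bounds uniform in $r$ (and $\omega$). First, $\mathcal{T}_\Prob$ is a positive linear functional with $\mathcal{T}_\Prob(\mathbf{1}) = \Prob(X) = 1$, hence bounded of norm one, so $|\mathcal{T}_\Prob(f)| \le \|f\|$. Second, writing $\Lambda_r(f) := \#(B_r \cap \cL^\omega)^{-1}\,\mathrm{Tr}(\pi_\omega(f)\vert_{B_r})$, the elementary estimate $|\mathrm{Tr}(P M P)| \le (\mathrm{rank}\,P)\,\|M\|$ for a bounded operator $M$ and a finite-rank projection $P$ gives $|\Lambda_r(f)| \le \|\pi_\omega(f)\| \le \|f\|$. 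Given $\varepsilon > 0$, choose $f_0 \in \mathcal{A}_0$ with $\|f - f_0\| \le \varepsilon$; then
\[
	\left|\mathcal{T}_\Prob(f) - \Lambda_r(f)\right| \le \left|\mathcal{T}_\Prob(f - f_0)\right| + \left|\mathcal{T}_\Prob(f_0) - \Lambda_r(f_0)\right| + \left|\Lambda_r(f_0 - f)\right| \le 2\varepsilon + \left|\mathcal{T}_\Prob(f_0) - \Lambda_r(f_0)\right|,
\]
and letting $r \to \infty$ with the help of the first step gives $\limsup_{r \to \infty}\left|\mathcal{T}_\Prob(f) - \Lambda_r(f)\right| \le 2\varepsilon$; as $\varepsilon$ is arbitrary, $\Lambda_r(f) \to \mathcal{T}_\Prob(f)$, i.e.~\eqref{eq:traceergodicity}. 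The uniqueness assertion is then immediate: the right-hand side of~\eqref{eq:traceergodicity} involves no choice of measure, and in any case Proposition~\ref{prop:discreteHullErgodicity} already guarantees that $X$ carries a single invariant probability measure.

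I expect the propagation from $\mathcal{A}_0$ to $\mathcal{A}$ to be the delicate point: for a general element of the completion $\pi_\omega(f)$ is no longer a finite-range hopping operator, so $\mathrm{Tr}(\pi_\omega(f)\vert_{B_r})$ cannot be computed site by site, and the argument must rely on the $r$-uniformity of the operator bound for $\Lambda_r$ together with the $\omega$-uniformity of the Birkhoff convergence in Proposition~\ref{prop:discreteHullErgodicity} (inherited from the continuous statement). The remaining ingredients — positivity and unitality of $\mathcal{T}_\Prob$, the fact that $g = f(\cdot, \bzero) \in C(X)$, and the identification of $\Xi^\omega$ with $\cL^\omega$ — are routine.
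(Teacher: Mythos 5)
Your proof is correct and follows exactly the route the paper intends: the text presents this proposition as a ``classical consequence'' of Proposition~\ref{prop:discreteHullErgodicity} (citing Bellissard) and omits the argument, and your two-step derivation --- reading off the diagonal matrix elements $\langle \delta_x, \pi_\omega(f)\delta_x\rangle = f(\mathfrak{t}_{-x}\omega,\bzero)$ so that the normalized restricted trace becomes a Birkhoff average over $\cL^\omega\cap B_r$ for $f\in\mathcal{A}_0$, followed by the $3\varepsilon$ density argument using the $r$- and $\omega$-uniform bounds $|\mathcal{T}_\Prob(f)|\le\Vert f\Vert$ and $|\Lambda_r(f)|\le\Vert f\Vert$ --- is precisely the standard way to fill it in. Nothing to correct.
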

Moreover, for $j=1,\cdots,d,$ the derivation defined by~\eqref{eq:derivation} extends easily: for $\bgamma_2 \in X_1$ and $\bn \in \cR_1$, $\bq \in \bgamma_2$, $\bp \in \bgamma_1$, $\bm \in \cR_2$,
\begin{subequations}
\begin{equation}
\begin{aligned}
(\partial_j f)_{11}(\bgamma_2, \bn) &= i \mathrm{n}_j f_{11}(\bgamma_2, \bn), & (\partial_j f)_{12}(\bq) &= i \mathrm{q}_j f_{12}(\bq), \\
(\partial_j f)_{21}(\bp) &= i \mathrm{p}_j f_{21}(\bp), \hspace{1cm} & (\partial_j f)_{22}(\bgamma_1, \bm) &= i \mathrm{m}_j f_{22}(\bgamma_1, \bm) .
\end{aligned}
\end{equation}
There exists also a derivation for $j=d+1$
\begin{equation}
\begin{aligned}
(\partial_{d+1} f)_{11}(\bgamma_2, \bn) &= 0, & (\partial_{d+1}  f)_{12}(\bq) &= +i f_{12}(\bq), \\
(\partial_{d+1}  f)_{21}(\bp) &= -i f_{21}(\bp), \hspace{1cm} & (\partial_{d+1}  f)_{22}(\bgamma_1, \bm) &= 0.
\end{aligned}
\end{equation}
\end{subequations}

Note that the derivation operator $\partial_{d+1}$ is bounded on $\mathcal{A}$. Hence, an element $f$ belongs to $\mathcal{C}^N$ if and only if $\Vert \partial_{\boldsymbol{\alpha}} f \Vert < \infty$ for all multi-indexes $\boldsymbol{\alpha}$ such that $\sum_{j=1}^d \alpha_j \leq N$.

\begin{remark}
	The extension to multilayer systems of the formalism introduced in the previous sections is straightforward, though somewhat cumbersome.
\end{remark}

\subsection{Noncommutative Kubo formula}

Electronic transport can be modeled as the result of the interplay between the quantum evolution in the presence of a uniform electric field and dissipation mechanisms such as scattering events which depend on the environment of the charge carrier (electron or hole) with charge $q$. As discussed in the lecture notes~\cite{bellissard2003coherent}, models for these dissipation mechanisms can be obtained by considering microscopic, many-body electron systems coupled with an environment such as a phonon bath, and then integrating out the degrees of freedom of the environment to obtain an effective single-particle model. As in the Drude model, dissipation mechanisms are represented by discrete scattering events (collisions) with Poisson-distributed independent time delays between successive collisions.

Given any one-particle density matrix $\rho$ as an initial state (i.e. a positive element of the $C^*$-algebra associated with a quantum system), the effective, collision-averaged one-particle time evolution is given by the Liouville equation~\cite{schulz1998kinetic}
\begin{equation}
	\frac{\mathrm{d}\rho}{\mathrm{d}t} + \mathcal{L}_{h - q \mathcal{E}(t) \cdot \vec{x}}(\rho) = - \frac{\boldsymbol{1} - \widehat{\kappa}^*}{\tau}(\rho),
\end{equation}
where
\begin{itemize}
	\item $h$ is the effective single-electron tight-binding Hamiltonian,
	\item $\mathcal{E}$ is a constant or time-harmonic spatially uniform electric field with frequency $\widehat{\omega}$,
	\item $\vec{x}$ is the position operator and $\boldsymbol{1}$ is the identity operator,
	\item $q$ is the charge of the carrier (hole or electron),
	\item $\mathcal{L}_{h - q \mathcal{E}(t) \cdot \vec{x}} = i/\hbar \left [ h, \cdot \right ] - \frac{q}{\hbar} \mathcal{E}(t) \cdot \nabla $ is the Liouvillian operator governing the one-particle time evolution in the absence of collisions,
	\item $\widehat{\kappa}^*$ is a scattering-event-averaged {\em collision efficiency operator},
	\item $\tau$ is the mean collision time of a Poisson process with law $e^{-t/\tau} \mathrm{d}t/\tau$ governing the time delay between the independent scattering events (collisions).
\end{itemize}

The linear conductivity tensor is then readily available in the framework of $C^*$-algebras by the famous noncommutative Kubo formula~\cite{schulz1998kinetic}:
\begin{equation}\label{eq:KuboFormula}
	\sigma_{ij}(\widehat{\omega}) = \frac{q^2}{\hbar^2} \Trace{ \partial_i h \left [ (\boldsymbol{1} - \widehat{\kappa}^*) / \tau + \mathcal{L}_h - i \widehat{\omega} \right ]^{-1} \partial_j f_{\beta, \mu}(h)},
\end{equation}
where $f_{\beta, \mu}(h) = \frac{1}{1 + e^{\beta (h - \mu)}}$ is the Fermi-Dirac one-particle density matrix in the grand-canonical equilibrium with chemical potential (Fermi level) $\mu$ and at temperature $T$ such that $\beta = 1/k_B T$.

In the relaxation time approximation (RTA), the effective relaxation operator $(\boldsymbol{1} - \widehat{\kappa}^*) / \tau$ is replaced by $\boldsymbol{1}/\tau_\mathrm{rel}$, a single relaxation time which depends in general strongly on the temperature.
Full details of the modeling assumption and derivation of~\eqref{eq:KuboFormula} can be found in the references~\cite{bellissard1994noncommutative,schulz1998kinetic,bellissard2003coherent,prodan2012quantum}.

\section{Numerical example} \label{sec:numerics}
We propose in this section a minimalistic one-dimensional toy model to study incommensurability effects in multilayer systems, which is new up to our knowledge.
We then present a numerical strategy and numerical results for the computation of the density of states and the conductivity. In this paper, we will restrict ourselves to scanning rational values of the lattice ratio parameter. We therefore use the traditional approach of constructing periodic supercells~\cite{prodan2012quantum}, and we never actually compute directly quantities of interest in the case of incommensurate lattice ratios.
In future publications, we will propose strategies based directly on the $C^*$-algebra representation, possibly addressing directly the incommensurate case, and with rigorous error control. This substantial additional effort is currently ongoing.

\subsection{Description of the model}

Let us consider two parallel one-dimensional crystals with lattice constants $\ell_1$ and $\ell_2$ normalized such that
\begin{equation}\label{eq:normalization1d}
	\ell_1 \ell_2 = 1.
\end{equation}
Following the notation introduced in Sections~\ref{sec:hull} and~\ref{sec:CSA}, we set the lattices $\cR_j = \ell_j \Z$ and the unit cells $\Gamma_j = \R / \cR_j \equiv \ell_j \T$  for $j = 1,2$, where $\T := \R / \Z$ is the one-dimensional $1$-periodic torus. The set of all possible configurations is parameterized by the hull $\Omega = \Gamma_1 \times \Gamma_2$.
We then form a quantum lattice model with one orbital per unit cell in each layer. The relevant parameterization is provided by the transversal $X = X_1 \cup X_2$ with $X_1 \equiv \ell_2 \T$ and $X_2 \equiv \ell_1 \T$.

\begin{figure}[hb!]
	\centering
	\includegraphics[width=.6\textwidth]{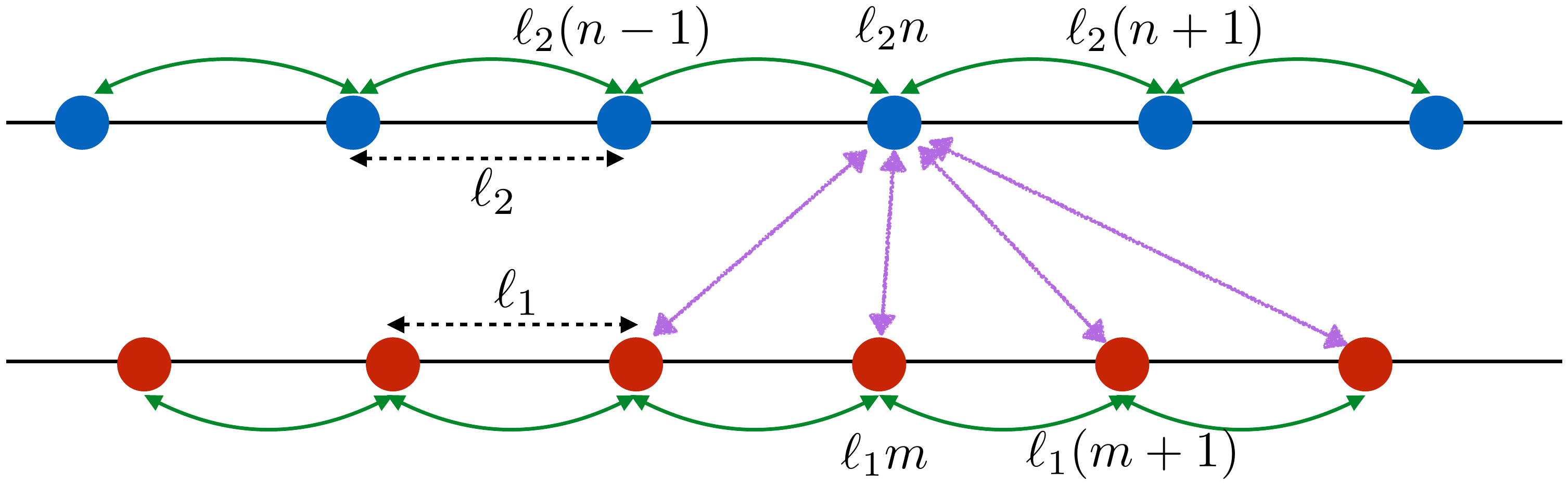}
	\caption{Sketch of the geometry and tight-binding hoppings of our 1D toy model.}
	\label{fig:ToyModel}
\end{figure}

We consider the model Hamiltonian generated by an element $h$ of the $C^*$-algebra presented in Section~\ref{sec:bilayers} and which takes the values
\begin{equation}\label{def:ModelHamiltonian}
	\begin{aligned}
	&h_{11}(\bgamma_2,  m) = \begin{cases} 1 & \text{if } \vert m \vert = \ell_1, \\ 0 & \text{otherwise,} \end{cases} \hspace{2cm} & \text{for } \bgamma_2 \in \ell_2 \T,\quad m \in \ell_1\Z,\\
	&h_{12}(p) = W e^{-\frac{1}{2} \left ( \frac{p}{\sigma}\right )^2 }, & \text{for } p \in \R, \\
	&h_{21}(q) = W e^{-\frac{1}{2} \left ( \frac{q}{\sigma}\right )^2 }, & \text{for } q \in \R,  \\
	&h_{22}(\bgamma_1,  n ) = \begin{cases} 1 & \text{if } \vert n \vert = \ell_2, \\ 0 & \text{otherwise,} \end{cases} & \text{for } \bgamma_1 \in \ell_1 \T,\quad n \in \ell_2\Z.
	\end{aligned}
\end{equation}
The corresponding Hamiltonian of the model $H_\omega = \pi_\omega(h)$, given by \eqref{eq:RepresentationFormulaBilayer} and sketched on Figure~\ref{fig:ToyModel}, features two intra-chain first-neighbor hopping models with amplitude normalized to $1$ for each chain, and an inter-chain coupling term with a Gaussian profile depending on the distance between the lattice sites. The two parameters of the model are the maximum amplitude $W$ and the characteristic length $\sigma$ of the inter-chain hopping interaction terms.

\subsection{Numerical approach}

Using this toy model, we aim to showcase the expected effects of continuously varying the lattice constants ratio $\alpha = \ell_2/\ell_1$ in a range around the periodic case of matched chains, $\ell_1 = \ell_2$ i.e. $\alpha = 1$.

\subsubsection{Periodic supercells}
We use here the well-known numerical strategy of scanning all periodic approximations with a given total number of sites $N$ in this range~\cite{Hofstadter1976,prodan2012quantum}, thus creating large periodic supercells. Allocating $p$ atoms to the bottom chain and $q$ atoms to the bottom chain so that $p+q = N$, we set
\begin{equation}\label{def:PeriodicSupercellParameters}
	\ell_1 = \sqrt{\frac{q}{p}}, \qquad \ell_2 = \sqrt{\frac{p}{q}} \qquad \text{so that} \qquad \ell_1 \ell_2 = 1 \quad \text{and} \quad \alpha = \frac{\ell_2}{\ell_1} = \frac{p}{q}.
\end{equation}
Note that the periodic supercell length is $\sqrt{pq} = p\ell_1  = q\ell_2$.
We then scan the range of ratios $1/6 \leq \alpha \leq 6$ by varying $p$ from $p_\mathrm{min} = \left \lceil \frac{N}{7} \right \rceil$ to $p_\mathrm{max} = \left \lfloor \frac{6N}{7} \right \rfloor$. The corresponding tight-binding hamiltonian matrix $H_0^\alpha$ with periodic boundary conditions is then assembled from the $C^*$-algebra element $h^\alpha$~\eqref{def:ModelHamiltonian} using a cut-off distance of $6 \sigma$ for the Gaussian inter-layer term, and choosing the configuration $\gamma_1 = \gamma_2 = 0$.

Our two quantities of interest are as follows.
\begin{itemize}
 	\item the density of states, i.e., the spectral measure $\mathrm{d}\mu^\alpha(E)$ on $\R$ with support on $\sigma(h)$ is defined by
\begin{equation}\label{def:DoS}
	\int_{\R} \phi(E) \mathrm{d}\mu^\alpha(E) = \Trace{\phi(h^\alpha)} \approx \frac{1}{N} \mathrm{Tr}\left (\phi(H_0^\alpha) \right ),
\end{equation}
where the test function $\phi$ is analytic in an open neighborhood of $\sigma(h^\alpha)$;
	\item the conductivity at zero frequency given by the Kubo formula~\eqref{eq:KuboFormula}, which can be computed more efficiently by introducing the current-current correlation measure $\mathrm{d}\mathcal{M}^\alpha$:
	\begin{equation}\label{eq:Conductivity_ccc}
		\sigma = \left ( \frac{e}{\hbar} \right )^2 \iint_{\R^2} \frac{f_{\beta, \mu}(E') - f_{\beta, \mu}(E)}{E-E'} \frac{\mathrm{d}\mathcal{M}^\alpha(E,E')}{1/\tau_\mathrm{rel} - i/\hbar(E-E') -i\widehat{\omega}}.
	\end{equation}
	The spectral measure $\mathrm{d}\mathcal{M}^\alpha$ on $\R^2$ is defined by~\cite{prodan2016mapping,schulz1998kinetic}:
	\begin{equation}\label{def:cccMeasure}
	\begin{aligned}
		\iint_{\R^2} \phi_1(E)\phi_2(E') &\mathrm{d}\mathcal{M}^\alpha(E,E') = \Trace{\phi_1(h)\cdot \partial_1 h \cdot \phi_2(h) \cdot \partial_1 h}\\
		&  \approx \frac{1}{N}  \mathrm{Tr} \left ( \phi_1(H_0^\alpha)\cdot \widetilde{\partial_1 H_0^\alpha} \cdot \phi_2(H_0^\alpha) \cdot \widetilde{\partial_1 H_0^\alpha} \right ),
	\end{aligned}
	\end{equation}
where the test functions $\phi_1, \phi_2$ are analytic in an open neighborhood of $\sigma(h^\alpha)$. Since $x \mapsto x$ is not $\sqrt{pq}$-periodic, we have used in~\eqref{def:cccMeasure} the approximate periodic differential calculus introduced by Prodan~\cite{prodan2012quantum}, such that the approximate derivation $\widetilde{\partial_1 H_0^\alpha}$ is obtained by :
\begin{equation}\label{def:ApproxDiffCalculus}
	\widetilde{\partial_1 f}(\omega, x) = i \widetilde{\mathfrak{X}}(x) f(\omega, x),
\end{equation}
where $x \mapsto \widetilde{\mathfrak{X}}(x)$ is $\sqrt{pq}$-periodic and approximates the identity $x \mapsto x$ near zero.
\end{itemize}
\begin{remark}
	Note that knowledge of the current-current correlation measure, which depends only on the Hamiltonian, is enough to compute the conductivity for any values of the Fermi level $\mu$ or Boltzmann factor $\beta = 1/k_B T$.
\end{remark}

\subsubsection{Kernel polynomial method}
 The second ingredient in our calculations is a Chebyshev polynomial expansion, commonly called the Kernel Polynomial Method~\cite{Weisse2006}. The idea is to rescale the Hamiltonian, $H_0^\alpha \to \hat{H}_0^\alpha = (H^\alpha_0-b)/a$ with $a,b$ well chosen so that the spectrum $\sigma (\hat{H}_0^\alpha )$ lies in the energy range $(-1, 1)$.

 \paragraph{Density of States.} The moments of the spectral measure $\mathrm{d}\mu^\alpha$ can be computed on the basis of Chebyshev polynomials of the first kind $\{T_m \}_{m \geq 0}$, which form an orthonormal basis of $L^2([-1,1])$ with respect to the weight function $w(x) = 1/(\pi \sqrt{1-x^2})$. These polynomials obey the recursion relation,
 \begin{equation}\label{eq:ChebRecursion}
 	\begin{aligned}
 		&T_0(x) = 1, \qquad T_1(x) = x, \\
 		&T_{m+1}(x) = 2 x T_m(x) - T_{m-1}(x) \qquad \text{for } m \geq 1.
 	\end{aligned}
 \end{equation}
 From~\eqref{def:DoS}, we thus obtain that
\begin{equation}\label{def:DoSChebMoments}
	\mu^\alpha_m = \int_{\R} T_m\left (\frac{E-b}{a} \right) \mathrm{d}\mu^\alpha(E) \approx \frac{1}{N} \mathrm{Tr}\left (T_m(\hat{H}_0^\alpha) \right ) =  \frac{1}{N} \sum_{j=1}^N T_m(\hat{\lambda}_j^\alpha),
\end{equation}
where $\{\hat{\lambda}_j^\alpha \}_{1 \leq j \leq N}$ is the vector of eigenvalues of $\hat{H}_0^\alpha$. Utilizing the recursion~\eqref{eq:ChebRecursion}, the moments $(\mu^\alpha_m)_{0 \leq m \leq M}$ can be computed efficiently up to some fixed polynomial degree $M$.
Now, assuming that $h^\alpha$ has an absolutely continuous spectrum, the density of states is continuous with respect to the Lebesgue measure, $\mathrm{d}\mu^\alpha(E) = \nu^\alpha(E) dE$, we can reconstruct accurately the spectral density $\nu^\alpha$ from the Chebyshev moments~\cite{Weisse2006}:
\begin{equation}\label{eq:ChebDoSReconstruction}
	\nu^\alpha(E) = \frac{1}{\pi\sqrt{a^2 - ( E-b )^2}} \left ( \mu_0 + 2 \sum_{m = 1}^M \mu^\alpha_m g^M_m T_m\left ( \frac{E-b}{a} \right) \right ),
\end{equation}
where $g^M_m = [ (M-m+1) \cos(\frac{\pi m}{M+1}) + \sin(\frac{\pi m}{M+1}) \cot(\frac{\pi}{M+1})] / (M+1)$ are the Jackson damping coefficients designed to avoid spurious Gibbs oscillations.

Finally, we note that the values of $\nu^\alpha$ at the particular set of points
\begin{equation}\label{def:ChebGaussPoints}
	x_k = a \cos \left ( \frac{\pi (k+1/2)}{M}\right) + b \qquad \text{with } k = 0, \dots, M-1,
\end{equation}
coinciding with the abscissas of the Chebyshev-Gauss numerical integration points, can be obtained through a fast cosine transform in $\mathcal{O}(M \log M)$ operations since
\[
	\gamma^\alpha_k = \pi\sqrt{a^2 - ( x_k-b )^2}\nu^\alpha(x_k) = \sum_{m = 0}^M (1+ \delta_{m,0})\mu^\alpha_m g^M_m \cos \left ( \frac{\pi m (k+1/2)}{M}\right ).
\]
This further diminishes the numerical cost of evaluating~\eqref{eq:ChebDoSReconstruction}.

\paragraph{Conductivity.} Let us define the two-dimensional Chebyshev moments of the current-current correlation measure $\mathrm{d}\mathcal{M}^\alpha$ following \eqref{def:cccMeasure}:
\begin{equation*}
	\begin{aligned}
		\mathcal{M}^\alpha_{mn} = \iint_{\R^2} T_m\left (\frac{E-b}{a} \right)&T_n\left (\frac{E'-b}{a} \right) \mathrm{d}\mathcal{M}^\alpha(E,E') \\
		\approx & \frac{1}{N}  \mathrm{Tr} \left ( T_m(\hat{H}_0^\alpha)\cdot \widetilde{\partial_1 H_0^\alpha} \cdot T_n(\hat{H}_0^\alpha) \cdot \widetilde{\partial_1 H_0^\alpha} \right ).
	\end{aligned}	
\end{equation*}
To simplify the computation, let us introduce the diagonal matrix of eigenvalues $\hat{D}^\alpha = \mathrm{Diag}(\lambda^\alpha_1, \dots, \lambda^\alpha_N)$, a unitary matrix of eigenvectors $V^\alpha$ of $\hat{H}_0^\alpha,$ and the Hermitian matrix $J^\alpha$ such that
\[
	J^\alpha = (V^\alpha)^*  \widetilde{\partial_1 H_0^\alpha} V^\alpha \hspace{1cm} \text{and} \hspace{1cm} \hat{H}_0^\alpha = V^\alpha \hat{D}^\alpha (V^\alpha)^*.
\]
Then,
\begin{equation}\label{def:cccMeasureChebMoments}
	\begin{aligned}
		\mathcal{M}^\alpha_{mn} \approx \frac{1}{N}  \mathrm{Tr} \left ( T_m(\hat{D}^\alpha)\cdot J^\alpha \cdot T_n(\hat{D}^\alpha) \cdot J^\alpha \right )
		 = \frac{1}{N} \sum_{i,j=1}^N T_m(\hat{\lambda}^\alpha_i)  \vert J^\alpha_{ij} \vert^2 T_n(\hat{\lambda}^\alpha_j).
	\end{aligned}	
\end{equation}
The moments $\mathcal{M}^\alpha_{mn}$ can thus be efficiently computed up to the partial degree $M$, using the recursion~\eqref{eq:ChebRecursion} as before. They can then be used to evaluate the conductivity by Chebyshev-Gauss numerical integration in~\eqref{eq:Conductivity_ccc} as follows. Denoting $\Phi_{\beta, \mu, \tau_\mathrm{rel},\widehat{\omega}}(E,E')$ the integrand in the right-hand side of~\eqref{eq:Conductivity_ccc}, we introduce approximate Chebyshev moments $\Phi_{\beta, \mu, \tau_\mathrm{rel},\widehat{\omega}}^{mn}$ computed with Chebyshev-Gauss integration on the points $x_k$ defined by~\eqref{def:ChebGaussPoints}:
\[
	\begin{aligned}
		\Phi_{\beta, \mu, \tau_\mathrm{rel},\widehat{\omega}}^{mn} &= \frac{(1+\delta_{m,0})(1+\delta_{n,0})}{M^2} \sum_{k,l=0}^{M-1} \Phi_{\beta, \mu, \tau_\mathrm{rel},\widehat{\omega}}(b+ax_k, b+ax_l) T_m(x_k)T_n(x_l) \\
		& \approx (1+\delta_{m,0})(1+\delta_{n,0}) \iint_{[-1,1]^2} \frac{\Phi_{\beta, \mu, \tau_\mathrm{rel},\widehat{\omega}}(b+ax, b+ay)T_m(x)T_n(y)}{\pi^2 \sqrt{1-x^2}\sqrt{1-y^2}} \mathrm{d}x\mathrm{d}y,
	\end{aligned}
\]
so that the function $\Phi_{\beta, \mu, \tau_\mathrm{rel}}$ is well approximated by the expansion
\[
	\Phi_{\beta, \mu, \tau_\mathrm{rel},\widehat{\omega}}(E,E') \approx \sum_{m,n = 0}^M \Phi_{\beta, \mu, \tau_\mathrm{rel},\widehat{\omega}}^{mn}  g^M_m g^M_n T_m\left (\frac{E-b}{a} \right) T_n\left (\frac{E'-b}{a} \right).
\]
Thanks to~\eqref{eq:Conductivity_ccc}, the conductivity is then given by:
\[
	\begin{aligned}
		\sigma &= \iint_{\R^2} \Phi_{\beta, \mu, \tau_\mathrm{rel},\widehat{\omega}}(E,E')\mathrm{d}\mathcal{M}^\alpha(E,E') \approx \sum_{m,n = 0}^M \Phi_{\beta, \mu, \tau_\mathrm{rel},\widehat{\omega}}^{mn} g^M_m g^M_n \mathcal{M}^\alpha_{mn} \\
			&\approx \sum_{m,n = 0}^M \frac{(1+\delta_{m,0})(1+\delta_{n,0})}{M^2} \sum_{k,l=0}^{M-1} \Phi_{\beta, \mu, \tau_\mathrm{rel},\widehat{\omega}}(b+ax_k, b+ax_l) T_m(x_k)T_n(x_l) g^M_m g^M_n \mathcal{M}^\alpha_{mn}.
	\end{aligned}
\]
Finally, we use the relations $T_m(x_k) = \cos \left ( \frac{\pi m (k+1/2)}{M}\right )$ and $T_n(x_l) = \cos \left ( \frac{\pi n (l+1/2)}{M}\right )$ and we exchange the order of summations. The numerical approximation to the conductivity is then given by the quadrature formula
\begin{equation}\label{eq:ChebConductivityQuadrature}
	\sigma \approx \frac{1}{M^2}\sum_{k,l=0}^{M-1} \Gamma^\alpha_{kl} \Phi_{\beta, \mu, \tau_\mathrm{rel},\widehat{\omega}}(ax_k + b,ax_l + b),
\end{equation}
where the $\Gamma_{kl}^\alpha$ are the raw output of the 2D fast cosine transform
\[
	\Gamma^\alpha_{kl} = \sum_{m,n = 0}^M (1+\delta_{m,0})(1+\delta_{n,0})\mathcal{M}^\alpha_{mn} g^M_m g^M_n\cos \left ( \frac{\pi m (k+1/2)}{M}\right )\cos \left ( \frac{\pi n (l+1/2)}{M}\right ).
\]
\begin{table}[b!]
\centering
	\begin{tabular}{|c|c|c|c|c|c|c|c|c|}
	\hline
	$N$ & $p_\mathrm{min}$ & $p_\mathrm{max}$ & $M$ & $W$ & $\sigma$ & $\beta$ & $\tau_\mathrm{rel}$ & $\widehat{\omega}$\\
	\hline
	$4181$ & $597$ & $3583$ & $1000$ & $.5$ & $.25$ & $250$ & $250$ & $0$\\
	\hline
	\end{tabular}
	\caption{Choice of numerical parameters}\label{table:NumParameters}
\end{table}

In conclusion, formulae~\eqref{eq:ChebDoSReconstruction} and~\eqref{eq:ChebConductivityQuadrature} show that the Chebyshev expansion leads to an efficient representation of the spectral measures, relying on fast discrete cosine transforms to compute weights $\gamma^\alpha_k$ and $\Gamma^\alpha_{kl}$ that depend only on the matrix $H^\alpha_0$. The knowledge of these coefficients enables accurate calculations of the quantities of interest for any choice of the energy $E$ for the density of states, or of the four parameters $\mu$, $\beta$, $\tau_\mathrm{rel}$ or $\widehat{\omega}$ for the conductivity.

\begin{remark}
	In this work, we present some results for relatively small Hamiltonian matrices, for which a full diagonalization is feasible and provides an effective way of computing the quantities of interest. For larger matrices, a promising alternative is the stochastic evaluation of traces which can be used to compute Chebyshev moments such as~\eqref{def:cccMeasureChebMoments} needed in the Kernel Polynomial Method, losing some accuracy but enabling much larger calculations~\cite{Weisse2006}.
\end{remark}

\subsection{Numerical results}
\begin{figure}[t!]
	\centering
	\includegraphics[width=.8\textwidth]{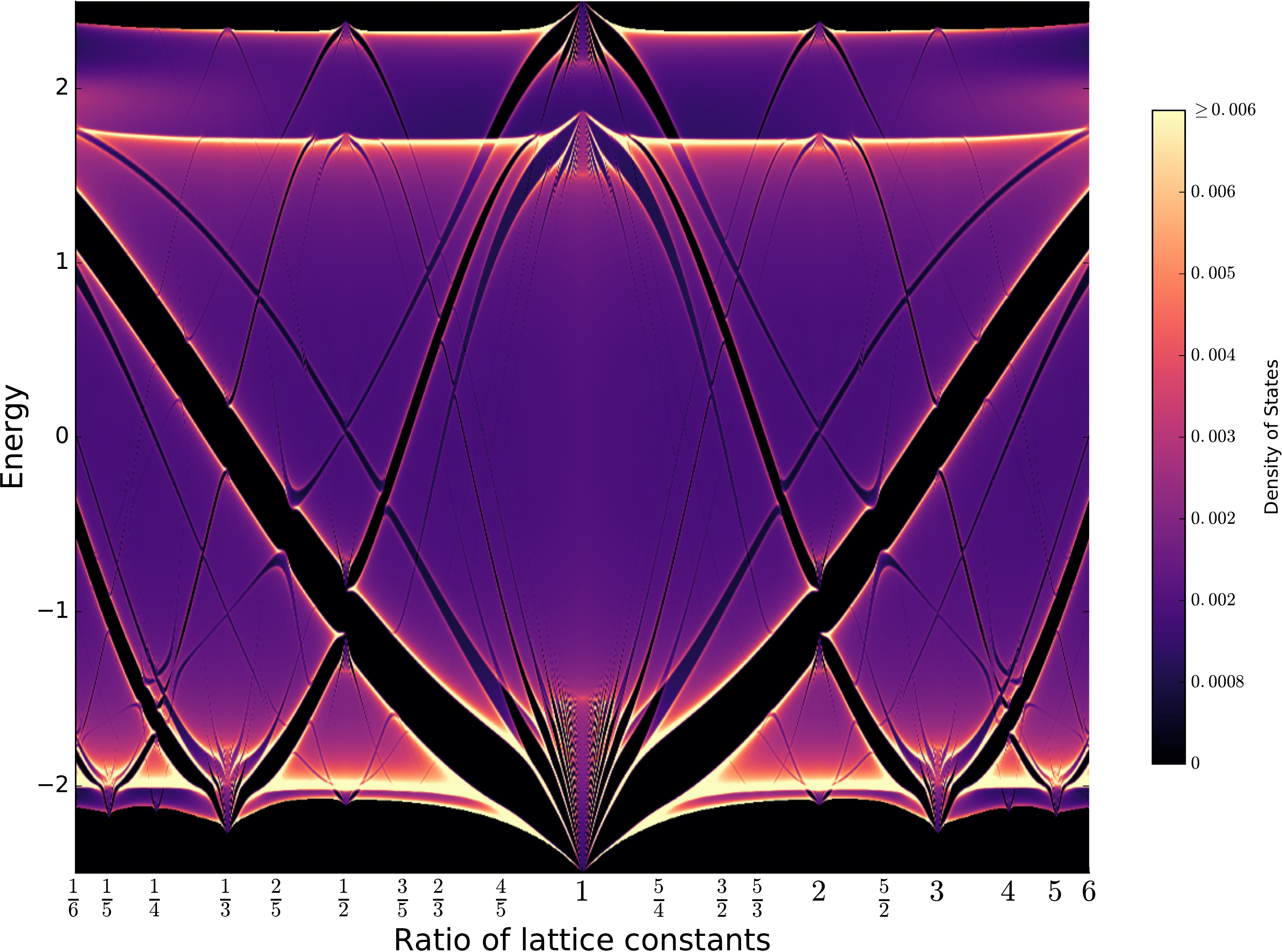}
	\caption{Density of states in color scale as a function of energy and lattice constants ratios.}
	\label{fig:DoS}
\end{figure}

For the purpose of the numerical examples presented in this section, we use the set of numerical parameters presented in Table~\ref{table:NumParameters}.
 The numerical strategy proposed above was implemented in Julia~\cite{Julia12}. We plot first the density of states as a function of lattice constants ratio $\alpha$ and energy $E$ in Figure~\ref{fig:DoS}.
 A clear fractal pattern of band gaps emerges, with continuous dependence on the lattice constants ratio parameter.
  Note the divergence of the density of states at the edges of the gaps due to the one-dimensional nature of the system.

 The overall pattern is reminiscent of the Hofstadter butterfly~\cite{Hofstadter1976}, which is a paradigm of fractal structure in the density of states of an electronic Hamiltonian induced by the interplay between two length scales (lattice and magnetic field), measured by the magnetic flux through the unit cell.
 In particular, around $\alpha = \ell_2 / \ell_1 = 1$ a large number of gaps open at the top and bottom of the spectrum. Although the resolution in this region is not very good, the similarity with Landau levels for which the energy is proportional to the magnetic field (the incommensurability parameter) is striking.

\begin{figure}[t!]
	\centering
	\includegraphics[width=.7\textwidth]{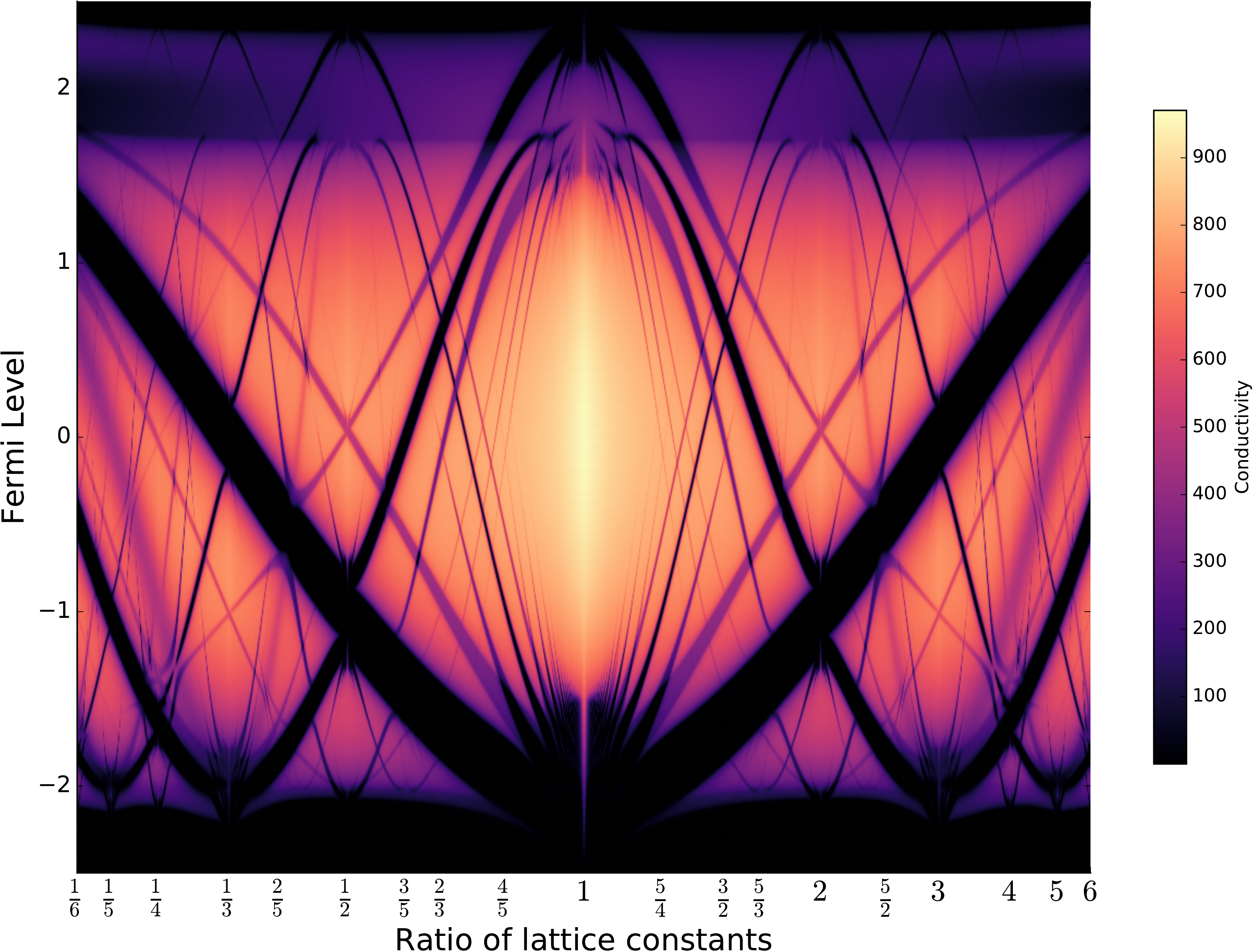}
	\caption{Conductivity in color scale as a function of Fermi level and lattice constants ratios, with parameter choice $W = .5$,  $\sigma = .25$, $\beta = \tau = 250$.}
	\label{fig:conductivity}
\end{figure}
\begin{figure}[b!]
	\centering
	\includegraphics[width=.7\textwidth]{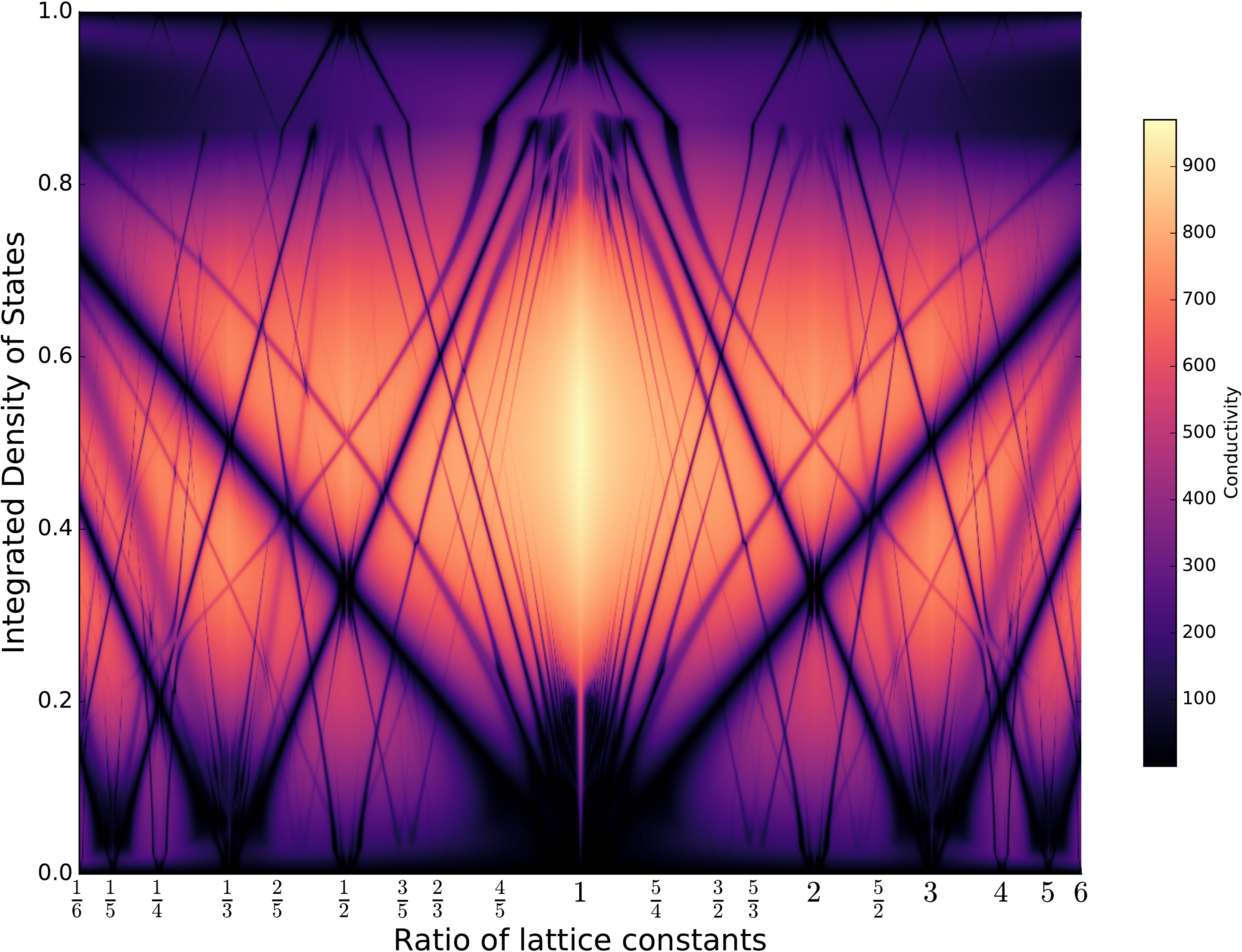}
	\caption{Conductivity in color scale as a function of the integrated density of states.}
	\label{fig:conductivity_2}
\end{figure}

Next, we plot the conductivity as a function of lattice constants ratio $\alpha$ and Fermi level $\mu$ in Figure~\ref{fig:conductivity}. The same fractal pattern emerges, however a striking difference is that the conductivity drops at the edges of the gaps (where the density of states is maximum). Strong insulating gaps occur along the spectral gaps as expected, forming a strong fan structure around $\alpha = 1$.

This fan structure is repeated at a number of values of $\alpha$ corresponding to rational numbers: $1/5$ and $5$, $1/3$ and $3$ are particularly strong, with weaker features at $1/4$ and $4$, $1/2$ and $2$, $3/5$ and $5/3$.
In general, features such as gaps opening or closing appear at rational values of $\alpha$.

Finally, we plot in Figure~\ref{fig:conductivity_2} the conductivity again, which we plot this time as a function of the integrated density of states
\[
	n^\alpha(E) = \int_{-\infty}^E \mathrm{d}\mu^\alpha,
\]
where $\mathrm{d}\mu^\alpha$ is the density of states measure defined by~\eqref{def:DoS}.
The function $E \mapsto n^\alpha(E)$ is increasing from $0$ to $1$, but stays constant in a gap of the spectrum, thus in this representation the size of the insulating gaps is narrowed. Note that in experiments, the control is over the number of electrons per unit cell through gating or doping~\cite{Sugawara2011}, corresponding to the integrated density of states, and not directly over the Fermi level in general.

Surprisingly, the gaps are still clearly visible in this rescaled presentation, but appear as straight lines. This feature is reminiscent of experimental images of magneto-transport data in small twist angle bilayer graphene~\cite{Cao2016}, leading to exciting perspectives for the application of the framework presented in this paper to more realistic 2D multilayer systems. 
\section{Acknowledgments}
This work was supported in part by ARO MURI Award W911NF-14-1-0247. EC and PC are grateful to Jean Bellissard for useful discussions.
\nocite{*}

\bibliographystyle{plain}
\bibliography{biblio}
\end{document}